\newcolumntype{P}[1]{>{\centering\arraybackslash}p{#1}}
\newcommand{\Haf}{\text{Haf}}
\newcommand{\Per}{\text{Per}}
\newcommand{\Tr}{\text{Tr}}
\theoremstyle{definition}
\newtheorem{theorem}{Theorem}
\newtheorem{lemma}{Lemma}
\begin{document}
\title{Approximating outcome probabilities of linear optical circuits}
\author{Youngrong Lim}
\email{sshaep@kias.re.kr}
\affiliation{School of Computational Sciences, Korea Institute for Advanced Study, Seoul 02455, Korea}
\author{Changhun Oh}
\email{changhun@uchicago.edu}
\affiliation{Pritzker School of Molecular Engineering, University of Chicago, Chicago, Illinois 60637, USA}
\begin{abstract}
   Quasiprobability representations are important tools for analyzing a quantum system, such as a quantum state or a quantum circuit.
    In this work, we propose classical algorithms specialized for approximating outcome probabilities of a linear optical circuit using quasiprobability distributions. Notably, we can reduce the negativity bound of a circuit from exponential to at most polynomial for specific cases by modulating the shapes of quasiprobability distributions thanks to the symmetry of the linear optical transformation in the phase space. 
    Consequently, our scheme provides an efficient estimation of outcome probabilities within an additive-error whose precision depends on the classicality of the input state. 
    When the classicality is high enough, we reach a polynomial-time estimation algorithm of a probability within a multiplicative-error by an efficient sampling from a log-concave function.
    By choosing appropriate input states and measurements, our results provide plenty of quantum-inspired classical algorithms for approximating various matrix functions beating best-known results. Moreover, we give sufficient conditions for the classical simulability of Gaussian boson sampling using our approximating algorithm for any (marginal) outcome probability under the poly-sparse condition. 
\end{abstract}
\maketitle

\section{introduction}
Quantum computers are believed to provide significant advantages in solving computational problems beyond classical power~\cite{shor1999polynomial,arute2019quantum}.
Despite the potential advantage, determining which types of quantum circuits are classically simulable is still a longstanding open problem.
A viable approach for the problem is to examine quasiprobability distributions in the phase space, a standard method in quantum optics~\cite{cahill1969density}.
Specifically, on the one hand, if the output distributions of a quantum circuit can be described by nonnegative quasiprobability distributions, there is a classically efficient simulation of the circuit~\cite{mari2012positive}. 
On the other hand, one can estimate the outcome probabilities of a quantum circuit with the convergence rate depending on the negativity bound of the circuit, a measure of negativity in the phase space~\cite{pashayan2015estimating}. 
In both cases, the negativity of quasiprobability distributions plays a crucial role, which has been extensively studied~\cite{veitch2012negative,takagi2018convex,garcia2020efficient,tan2020negativity}.

Meanwhile, boson sampling has recently attracted lots of attention due to its feasible quantum advantage using a linear optical circuit~\cite{aaronson2011computational,zhong2020quantum,zhong2021phase,madsen2022quantum,deng2023gaussian}.
Also, there have been numerous studies of boson sampling using quasiprobability distributions~\cite{rahimi2015can,rahimi2016sufficient,opanchuk2018simulating,qi2020regimes,drummond2022simulating}.
In particular, while Fock-state and Gaussian boson sampling are believed to be hard to classically simulate~\cite{aaronson2011computational,hamilton2017gaussian}, boson sampling with a classical state input, represented by a nonnegative $P$-function, is efficiently simulated because its output distribution can be expressed by nonnegative quasiprobability distributions~\cite{rahimi2016sufficient}.
Interestingly, this quantum optical observation leads to the fact that the permanent of Hermitian positive-semidefinite (HPSD) matrices can be approximated within a multiplicative-error more easily (in $\text{BPP}^\text{NP}$) than that of arbitrary matrices (\#P-hard)~\cite{valiant1979complexity,rahimi2015can}.
In addition, a quantum-inspired algorithm using quasiprobability distributions has been proposed for estimating the permanent of an HPSD matrix within an additive-error, which outperforms Gurvits' algorithm under certain eigenvalue conditions~\cite{gurvits2005complexity,chakhmakhchyan2017quantum}. 
Thus, studying the quasiprobability representation of quantum circuits often provides new insight into computational problems.

However, previous studies are limited to the cases of classical input states with no negativity~\cite{rahimi2015can,rahimi2016sufficient,chakhmakhchyan2017quantum,qi2020regimes} or negativity bound that is polynomial in the system size~\cite{pashayan2015estimating}, which cannot cover typical quantum circuits with exponential negativity bound. A central question is how to generalize the quasiprobability methods to handle cases of exponential negativity bound. Such a generalization can also lead to improved quantum-inspired algorithms for approximating matrix functions, e.g., permanent and hafnian. Moreover, it is still open to finding an efficient algorithm 
for the multiplicative-error approximation of a matrix function beyond the method using sampling from nonnegative quasiprobability distributions~\cite{rahimi2015can}. Since a multiplicative-error approximation is significantly more powerful than an additive-error one and only a few examples have been known~\cite{jerrum2004polynomial,barvinok2016computing,barvinok2021remark}, such findings have intriguing applications in computational complexity.

\begin{table*}[!ht]
\setlength{\tabcolsep}{0.5em}
{\renewcommand{\arraystretch}{2}
\begin{ruledtabular}
\begin{tabularx}{\textwidth}{P{1.5cm}||P{7.5cm}|P{8.5cm}}
     & additive-error &  multiplicative-error (Eq.~(\ref{eq:muldef}))   \\
    \hline
    \hline
 $|\text{Haf}(R)|^2$ & $\epsilon \prod^M_{i} \frac{\lambda_{\max}^2}{\sqrt{\lambda^{2}_{\max}(W(1/e)-1)^2-\lambda^{2}_i W(1/e)^2}}~(*)~(\text{Th}.~\ref{th:haf})$  & \#P-hard~\cite{barvinok2016combinatorics} \\
 \hline
 \multirow{2}{3em}{Per($B$)} & $\lambda_{\min}=0: \epsilon \prod_i^M \frac{4\lambda_{\max}^2}{e(2\lambda_{\max}-\lambda_i)}~(*)$~(Th.~\ref{th:per}) & $\lambda_{\min}=0 : \text{NP-hard}$~\cite{meiburg2021inapproximability}  \\
 &$\lambda_{\min}>0:\epsilon \prod^M_i H^B_i(\lambda_i)~(*)~\text{Eq.~(\ref{eq:per2})}$ &$\lambda_{\min}>0:\frac{\lambda_{\max}}{\lambda_{\min}}\leq 2$~\cite{barvinok2021remark}~$(*)$   \\
 \hline
 Haf($A$) &$\epsilon \prod_{i}^M H_i^A(n,r_i)~(*)$~Eq.~(\ref{eq:addhafa}) &$n \geq \frac{\left(6 \sinh (2 r_{\max})+\sqrt{18 \cosh (4 r_{\max})-14}-2\right)}{4}~(*)$~(Th.~\ref{th:FPRAS})\\
 \hline
 Tor($R'$) & $\epsilon \prod_{i}^M T_i(\lambda_i)~(*)$~Eq.~(\ref{eq:addtorr}) & ?  \\
 \hline
 Tor($B'$) & $\epsilon \prod_{i}^M T_i^B(\lambda_i)~(*)$~Eq.~(\ref{eq:addtorb}) &  $\lambda_{\min} \geq \frac{1}{2}$ and $\lambda_{\max} \leq \frac{-\lambda_{\min}^2+3\lambda_{\min}-1}{\lambda_{\min}}~(*)$~Eq.~(\ref{eq:torcmulti})   \\
 \hline
 Tor($A'$) &$\epsilon \prod_{i}^M T_i^A(n,r_i)~(*)$~Eq.~(\ref{eq:addtora}) & $n\geq \frac{1}{2}\left(e^{2r_{\max}}\sqrt{e^{8r_{\max}}+3}+e^{6r_{\max}}-1\right)~(*)$~Eq.~(\ref{eq:tormul}) \\ 
\end{tabularx}
\end{ruledtabular}
}
\caption{Precision and conditions of efficient algorithms for estimating various matrix functions. $R $: complex symmetric matrices, $B$: HPSD matrices, $R'=\begin{pmatrix} 0 & R^* \\ R & 0\end{pmatrix}$,  $B'=\begin{pmatrix} B^T & 0 \\ 0 & B\end{pmatrix}$, $A=\begin{pmatrix} R & B \\ B^T & R^*\end{pmatrix}$, $A'=\begin{pmatrix} B^T & R^* \\R & B\end{pmatrix}$. $(*)$ indicates the results in the present work.
A question mark represents unknown.}
\label{table:estimation}
\end{table*}

In this work, we provide algorithms specialized for approximating the outcome probabilities of a linear optical circuit. First, we choose the $s$-parameterized quasiprobability distributions ($s$-PQDs), a  generalization of $P$-, $Q$-, and Wigner distributions~\cite{glauber1963quantum}. An advantage of adopting $s$-PQDs over previous approaches is that we can always obtain a nonnegative representation for a Gaussian input state by choosing an appropriate parameter $s$. Consequently, for a Gaussian boson sampling (GBS) circuit, we can significantly reduce the negativity bound of the circuit, which depends only on the maximum peak of the $s$-PQDs of measurement operators. Furthermore, the negativity bound is determined by the maximum possible $s$, called classicality of the input Gaussian state~\cite{lee1991measure}, which has a clear physical meaning: the more classicality in the input state, the lower negativity bound. 

Our main technical contribution is to provide a way of manipulating the shape of $s$-PQDs using the symmetry of the circuit transformation in the phase space. This method can considerably lower the negativity bound, from exponential to at most polynomial in several cases, which renders efficient estimations of outcome probabilities within 1/poly additive-error. Strikingly, when the classicality of the input state is high enough, we introduce a fully polynomial-time randomized approximation scheme (FPRAS) by an efficient sampling from a log-concave function~\cite{lovasz2007geometry,barvinok2021remark}, which can efficiently (in BPP) approximate the corresponding outcome probability within a multiplicative-error. 

Our results have several intriguing applications to problems in computational complexity. First, we give estimating algorithms with additive-errors for matrix functions represented by the outcome probabilities of a linear optical circuit, beating the best-known classical algorithms, e.g., the hafnian of a complex symmetric matrix and the permanent of an HPSD matrix. Second, we provide efficient multiplicative-error algorithms for those matrix functions with certain structured matrices, which are not considered in the literature to the best of our knowledge.  Last but not least, we present sufficient conditions on the classical simulability of GBS, by applying our estimation scheme to any (marginal) outcome probability of a GBS circuit with 1/poly additive-error under a poly-sparsity condition.

\section{Summary of results}
We place a background material in Section~\ref{sec:back} to provide preliminary information of quasiprobability and estimation schemes. The main results are following: 

\begin{itemize}
    \item Our key technique for manipulating quasiprobability distribution in the phase space (Section~\ref{sec:res.mani})
    \item Scheme for additive/multiplicative-errors approximations of outcome probabilities of a linear optical circuit depending on the parameter regime (Section~\ref{sec:res.improved})
    \item Providing additive-error approximation algorithms for various matrix functions beating best-known classical algorithms (Section~\ref{sec:add}), such as the hafnian of a complex symmetric matrix~(Theorem~\ref{th:haf}) and the permanent of HPSD matrix~(Theorem~\ref{th:per})
    \item Providing multiplicative-error approximation algorithms for various matrix functions  (Section~\ref{sec:mul}), such as the hafnian of a structured matrix~(Theorem~\ref{th:FPRAS})
    \item Sufficient conditions for the efficient simulability of lossy Gaussian boson sampling under a poly-sparsity assumption (Section~\ref{sec:sparse} and Theorem~\ref{th:GBS})
\end{itemize}
  We summarize additive- and multiplicative-errors quantum-inspired algorithms for various matrix functions including Torontonian in Table~\ref{table:estimation}.

\section{Background}\label{sec:back}
\subsection{Quasiprobability representation of outcome probability}

Let us consider the Born rule probabilities of a quantum optical circuit using quasiprobability distributions in the phase space. 
For an $M$-mode input state $\rho_{\text{in}}$, a quantum channel ${\cal E}$, and a measurement $\Pi_{\bm{\nu}}$, the probability for a measurement outcome $\bm{\nu}\coloneqq (\nu_1,\dots,\nu_M)$ can be written as~\cite{rahimi2016sufficient}
\begin{align}\label{eq:probgen}
    p(\bm{\nu})&=\pi^M\int d^{2M}\bm{\alpha}d^{2M}\bm{\beta} W^{(t)}_{\rho_\text{in}}(\bm{\alpha})T^{(t,s)}_{\cal E}(\bm{\beta}|\bm{\alpha})W^{(-s)}_{\Pi_{\bm{\nu}}}(\bm{\beta}),
\end{align}
where $W^{(t)}_{\rho_{\text{in}}}(\bm{\alpha})$, $W^{(-s)}_{\Pi_{\bm{\nu}}}(\bm{\beta})$, and $T^{(t,s)}_{\cal E}(\bm{\beta}|\bm{\alpha})$ are $s$-PQDs of the input state, measurement, and the transition function of the circuit channel, respectively. Here, $\bm{\alpha}$ is a quadrature variable in the $2M$-dimensionnal phase space and $\bm{\beta}$ is a transformed quadrature variable by the transition function $T^{(t,s)}_{\cal E}(\bm{\beta}|\bm{\alpha})$. Specifically, $W^{(s)}_\rho(\bm{\alpha})$ is the $s$-PQD for a Hermitian operator $\rho$ defined by
\begin{equation}
    W^{(s)}_\rho(\bm{\alpha})=\int \frac{d^{2M}\bm{\alpha}'}{\pi^{2M}}\Tr[\rho D(\bm{\alpha}')]e^{\bm{\alpha}'s\bm{ \alpha}'^{\dagger}/2}e^{\bm{\alpha \alpha}'^{\dagger}-\bm{\alpha'\alpha}^{\dagger}},
\end{equation}
where $D(\bm{\alpha}')=e^{\bm{\alpha' \hat{a}}^{\dagger}-\bm{\hat{a} \alpha'}^{\dagger}}$ is the $M$-mode displacement operator. Note that $s=-1,0,1$ of $s$-PQDs correspond to the Q-, Wigner, and P-distribution, respectively. Also, $(-s)$-PQDs of the measurement operators satisfy the normalization condition such that $\pi^M \sum_{\bm{\nu}}W^{(-s)}_{\Pi_{\bm{\nu}}}(\bm{\beta})=1$ for any $\bm{\beta}$. The transition function of a quantum channel ${\cal E}$ is defined by~\cite{rahimi2016sufficient}
\begin{align}
&T^{(t,s)}_{\cal E}(\bm{\beta}|\bm{\alpha})=\int \frac{d^{2M}\bm{\zeta}}{\pi^{2M}}e^{\bm{\zeta} \bm{s}\bm{\zeta}^{\dagger}/2}e^{\bm{\beta}\bm{\zeta}^{\dagger}-\bm{\zeta}\bm{\beta}^{\dagger}} \nonumber \\
&\times \int \frac{d^{2M}\bm{\xi}}{\pi^{M}}e^{-\bm{\xi} \bm{t}\bm{\xi}^{\dagger}/2}e^{\bm{\xi}\bm{\alpha}^{\dagger}-\bm{\alpha}\bm{\xi}^{\dagger}}\Tr \left[ {\cal E}(D^{\dagger}(\bm{\xi}))D(\bm{\zeta}) \right],
\end{align}
where 
\begin{equation}
    {\cal E}(D^{\dagger}(\bm{\xi}))=e^{\bm{\xi}\bm{\xi}^{\dagger}/2}\int \frac{d^{2M}\bm{\gamma}}{\pi^M}e^{\bm{\gamma}\bm{\xi}^{\dagger}-\bm{\xi}\bm{\gamma}^{\dagger}}{\cal E}(\ket{\bm{\gamma}}\bra{\bm{\gamma}}),
\end{equation}
for a coherent state $\ket{\bm{\gamma}}$.
In this work, we are concerned with a linear optical circuit represented by a unitary matrix $U$ with a product input state $\rho_{\text{in}}=\otimes_{i=1}^M \rho_i$ and a product measurement $\Pi_{\bm{\nu}}=\otimes_{j=1}^M \ket{\nu_j}\bra{\nu_j}$. In that case, if we choose $t=s$, then the transition function becomes a simple form such as $T_{\cal E}(\bm{\beta}|\bm{\alpha})=\delta(\bm{\beta}-U\bm{\alpha})$. Consequently, Eq.~(\ref{eq:probgen}) can be written in the simpler form
\begin{align}\label{eq:probprod}
    p(\bm{\nu})&=\pi^M\int d^{2M}\bm{\alpha} \prod^M_{i=1}  W_{\rho_i}^{(s)}(\alpha_i)\prod^M_{j=1} W^{(-s)}_{\Pi_{\nu_j}}(\beta_j),
\end{align}
where $\beta_i=\sum_{j=1}^M U_{ji}\alpha_j$ by the transition function.
Suppose we have a product Gaussian input state with the covariance matrix $V_i$ and a product photon number measurement $\Pi_{m_j}$. The $s$-PQD of a single-mode Gaussian state with zero-displacement having covariance matrix $V$ is given by~\cite{adesso2014continuous}
\begin{equation}\label{eq:spqd}
   W_V^{(s)}(\bm{\alpha})=\frac{\exp \big[ -\bm{\alpha}(V-s\mathbb{I}_2/2)^{-1}\bm{\alpha}^T\big]}{\pi \sqrt{\det (V-s\mathbb{I}_2/2)}}.
\end{equation}
Note that for any physical covariance matrix $V$, there exists a critical value $s_{\max}$ such that $W_V^{(s)}(\bm{\alpha})$ is a proper Gaussian distribution for $s<s_{\max}$ and has a delta function singularity for $s=s_{\max}$~\cite{qi2020regimes}. 
We call $s_{\max}$ `classicality' of the Gaussian input state~~\cite{lee1991measure}.
Meanwhile, the $(-s)$-PQD for the photon number measurement operator $\ket{m}\bra{m}$ can be represented~\cite{wunsche1998some,tan2020negativity} as
\begin{equation}
W^{(-s)}_{\Pi_m}(\bm{\beta})=\frac{2}{\pi(s+1)}\left(\frac{s-1}{s+1} \right)^m \text{L}_m \left(\frac{4|\bm{\beta}|^2}{1-s^2} \right) e^{-\frac{2|\bm{\beta}|^2}{s+1}},
\end{equation}
where $\text{L}_m(x)$ is the $m^{\text{th}}$ Laguerre polynomial and $s>-1$.

\subsection{Approximation schemes of outcome probability}
\subsubsection{Additive-error approximation}
The first method is to estimate the Born rule probability within an additive-error. From the result in Ref.~\cite{pashayan2015estimating}, one can estimate the outcome probability $p(\bm{\nu})$ within error $\epsilon$ with success probability $1-\delta$ for a given number of samples $N=\frac{2{\cal M}_{\rightarrow}^2}{\epsilon^2}\log \frac{2}{\delta}$. Here, ${\cal M}_{\rightarrow}$ is the (forward) negativity bound of the circuit defined as
\begin{equation}
    {\cal M}_{\rightarrow}=\prod_{i=1}^M{\cal M}_{\rho_i}\prod^M_{j=1}\max_{\beta_j} \left|  W^{(-s)}_{\Pi_{\nu_j}}(\beta_j)\right|,
\end{equation}
where ${\cal M}_{\rho_i}=\int d^{2}\alpha_i \left| W_{\rho_i}^{(s)}(\alpha_i)\right|$ is the negativity of the state $\rho_i$ in the phase space. When the negativity bound ${\cal M}_{\rightarrow}$ increases at most polynomially in the number of modes $M$, we can efficiently estimate the corresponding outcome probability with additive-error $\epsilon$ within running time $T=\text{poly}(M,1/\epsilon,\log\delta^{-1})$.

Although this method is generally applicable for circuits having negativity, we usually encounter exponential negativity bound, i.e., ${\cal M}_\rightarrow=c^M$ with $c>1$, due to the input state negativity or a high maximum peak of the quasiprobability distribution of the measurement part. To circumvent this problem, finding a good quasiprobability representation for a given circuit with a small negativity is critical~\cite{zhu2016quasiprobability}. For instance, let us consider a GBS circuit with the input of a pure squeezed vacuum state and photon number measurement. By choosing $s$-PQDs, the negativity of the input state is removed when $s\leq s_{\max}$. However, the negativity bound is still exponential in the number of modes because of the high peaks of $s$-PQDs for the measurement operators except for a small squeezing parameter.

\subsubsection{Multiplicative-error approximation}
Remarkably, for special cases, we can reach a much stronger approximation, namely FPRAS.
An FPRAS can estimate the target function $p(\bm{\nu})$ within a multiplicative-error $\epsilon$, which means for any $0<\epsilon<1,~ 0<\delta<1$, the algorithm outputs $\mu$ such that
\begin{equation}\label{eq:muldef}
\text{Pr}[(1-\epsilon)p(\bm{\nu}) \leq \mu \leq (1+\epsilon)p(\bm{\nu})]\geq 1-\delta,
\end{equation}
with the running time  $\text{poly}(M,1/\epsilon,\log\delta^{-1})$. 
A sufficient condition for FPRAS is that the target function can be written as an integral of a log-concave function $f(t)$, i.e., $\log{f(\theta x+(1-\theta)y)\geq \theta \log{f(x)}+(1-\theta)\log{f(y)}}$ for all $x, y$ and $0<\theta<1$, such that $p(\bm{\nu})=\int_{\mathbb{R}^{2M}}f(t)dt$~\cite{lovasz2007geometry,barvinok2021remark}. Let us consider a linear optical circuit as in Eq.~(\ref{eq:probprod}). Since a multivariate Gaussian distribution is log-concave, $s$-PQD of a Gaussian input state fulfills the condition of log-concavity. If we choose a Gaussian measurement, the integrand satisfies the log-concavity but it is a trivial case. In this work, we develop a technique for making the quasiprobability of a non-Gaussian measurement log-concave, by manipulating quasiprobability in the phase space. 

\section{Results}
\subsection{Manipulating quasiprobability in the phase space}\label{sec:res.mani}
\begin{figure*}[t]
\includegraphics[width=\textwidth]{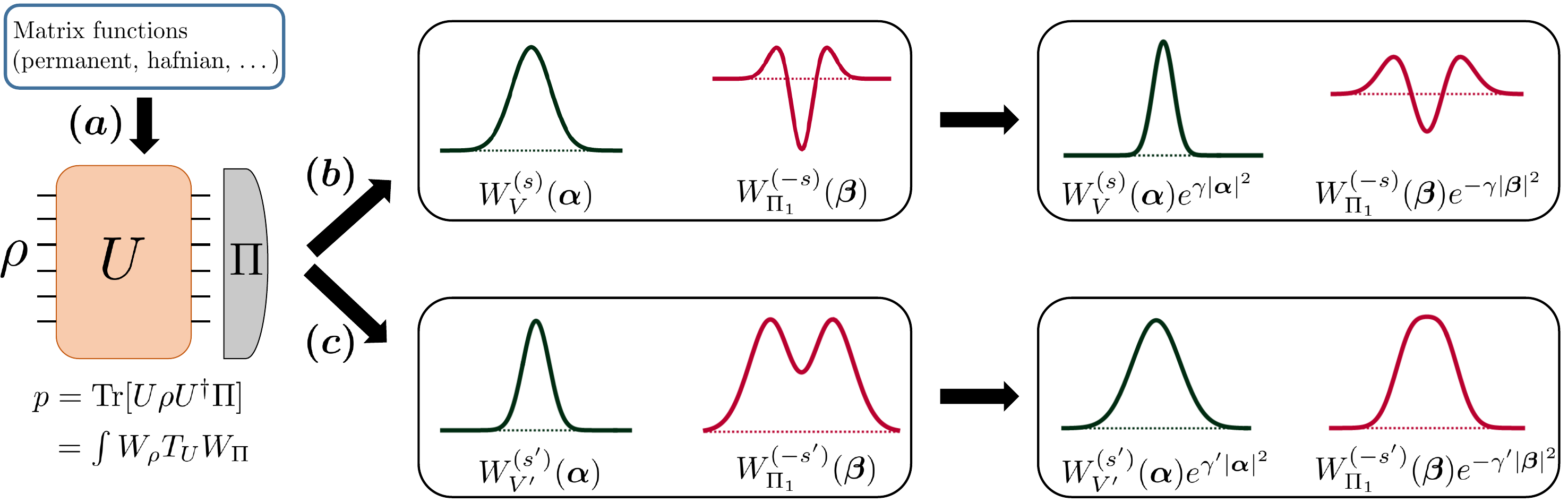}
\caption{Schematic diagram of quantum-inspired classical algorithms for approximating matrix functions. For a given matrix function, (a) find an embedding of the matrix function onto an outcome probability of a quantum circuit $(\rho, U, \Pi)$ and choose a quasiprobability representation of the probability. (b) We depict an example of a linear optical circuit, for the approximation scheme with additive-error. Using $s$-PQDs for the linear optical circuit, one can significantly reduce the negativity bound by appropriately choosing $\gamma<0$. (c) Approximation scheme with multiplicative-error. When the classicality of the input state is large, one can make the $s$-PQDs of the measurement operator a log-concave function by choosing a suitable $\gamma'>0$.
}
\label{fig:scheme}
\end{figure*}
In the previous section, we introduced two approximation schemes for the Born rule probability of an optical circuit. Those methods themselves do not seem to give powerful results for physically relevant situations. To obtain more interesting results, we propose a method manipulating the shapes of quasiprobability distributions using the symmetry of circuit transformation in the phase space. 

In a general situation, one can rewrite the Born rule probability Eq.~(\ref{eq:probgen}) as 
\begin{align}
p(\bm{\nu})&=\pi^M\int d^{2M}\bm{\alpha}d^{2M}\bm{\beta} W^{(t)}_{\rho_\text{in}}(\bm{\alpha})T^{(t,s)}_{\cal E}(\bm{\beta}|\bm{\alpha})W^{(-s)}_{\Pi_{\bm{\nu}}}(\bm{\beta}) \\
&=\pi^M\int d^{2M}\bm{\alpha}d^{2M}\bm{\beta} W^{(t)}_{\rho_\text{in}}(\bm{\alpha})h(\bm{\alpha})T^{'(t,s)}_U(\bm{\beta}|\bm{\alpha})\nonumber\\
& \times g(\bm{\beta})W^{(-s)}_{\Pi_{\bm{\nu}}}(\bm{\beta}) \\
&\coloneqq \pi^M\int d^{2M}\bm{\alpha}d^{2M}\bm{\beta} W^{'(t)}_{\rho_\text{in}}(\bm{\alpha})T^{'(t,s)}_U(\bm{\beta}|\bm{\alpha})W^{'(-s)}_{\Pi_{\bm{\nu}}}(\bm{\beta}),
\end{align}
where $W^{'(t)}_{\rho_\text{in}}(\bm{\alpha})=W^{(t)}_{\rho_\text{in}}(\bm{\alpha})h(\bm{\alpha})$ and $W^{'(-s)}_{\Pi_{\bm{\nu}}}(\bm{\beta})=W^{(-s)}_{\Pi_{\bm{\nu}}}(\bm{\beta})g(\bm{\beta})$ with appropriate functions $h(\bm{\alpha})$ and $g(\bm{\beta})$. For the second equality, the transition function should satisfy a condition resulting by a symmetry in the phase space, which is given by
\begin{equation}\label{eq:condition}
T^{(t,s)}_{\cal E}(\bm{\beta}|\bm{\alpha})=h(\bm{\alpha})T^{'(t,s)}_U(\bm{\beta}|\bm{\alpha})g(\bm{\beta}),
\end{equation}
Here, the auxiliary functions $h(\bm{\alpha})$ and $g(\bm{\beta})$ have to be chosen so that the modified functions $W^{'(t)}_{\rho_\text{in}}(\bm{\alpha})$, $T^{'(t,s)}_U(\bm{\beta}|\bm{\alpha})$, and $W^{'(-s)}_{\Pi_{\bm{\nu}}}(\bm{\beta})$ are well-behaved in the phase space. An important point is that the modified functions do not need to be proper quasiprobability distributions of physical operators because we only need to exploit their shapes in the phase space. Consequently, these additional degrees of freedom allow us to further optimize the shapes of $s$-PQDs for our purposes. 

 Let us consider the linear optical setting. For $t=s$, with transition function $T_U(\bm{\beta}|\bm{\alpha})=\delta(\bm{\beta}-U\bm{\alpha})$, one can choose the modified function $T'_U(\bm{\beta}|\bm{\alpha})=T_U(\bm{\beta}|\bm{\alpha})$ with $h(\bm{\alpha})=e^{\gamma|\bm{\alpha}|^2}$, $g(\bm{\beta})=e^{-\gamma|\bm{\beta}|^2}$ with an appropriate constant $\gamma$. For this choice, we exploit the norm-preserving symmetry of linear optical transformation in the phase space, i.e., $|\bm{\alpha}|^2=|\bm{\beta}|^2$, and the product form of $s$-PQDs. As a result, the Born rule probability is rewritten as
\begin{align}\label{eq:modified}
 &p(\bm{\nu})=\pi^M\int d^{2M}\bm{\alpha}d^{2M}\bm{\beta} \prod^M_{i=1}  W_{V_i}^{(s)}(\alpha_i) \delta(\bm{\beta-U\bm{\alpha}}) \nonumber\\ &\times \prod^M_{j=1} W^{(-s)}_{\Pi_{m_j}}(\beta_j)\\ 
 &=\pi^M\int d^{2M}\bm{\alpha}d^{2M}\bm{\beta} \prod^M_{i=1}  W_{V_i}^{(s)}(\alpha_i)e^{\gamma|\bm{\alpha}|^2} \delta(\bm{\beta-U\bm{\alpha}}) \nonumber \\
 &\times e^{-\gamma|\bm{\beta}|^2}\prod^M_{j=1}  W^{(-s)}_{\Pi_{m_j}}(\beta_j) \\
 &=\pi^M\int d^{2M}\bm{\alpha} \prod^M_{i=1}  W_{V_i}^{(s)}(\alpha_i)e^{\gamma|\alpha_i|^2} \prod^M_{j=1} e^{-\gamma|\beta_j|^2} W^{(-s)}_{\Pi_{m_j}}(\beta_j) \\
 &\coloneqq \int d^{2M}\bm{\alpha} \prod^M_{i=1}  P_{i}(\alpha_i,V_i,s,\gamma)\prod^M_{j=1} f_j(\beta_j,\Pi_{m_j},s,\gamma),
\end{align}

where $P_{i}(\alpha_i,V_i,s,\gamma)=\frac{1}{{\cal N}_i}W_{V_i}^{(s)}(\alpha_i)e^{\gamma|\alpha_i|^2}$ with appropriate normalization constants ${\cal N}_i$ satisfying $\int d^2 \alpha_i P_{i}(\alpha_i,V_i,s,\gamma)=1$, and $f_j(\beta_j,\Pi_{m_j},s,\gamma)={\cal N}_je^{-\gamma|\beta_j|^2} W^{(-s)}_{\Pi_{m_j}}(\beta_j)$ with ${\cal N}_i={\cal N}_j$, for $i=j$.

\subsection{Improved approximation for outcome probability of linear optical circuit}\label{sec:res.improved}

Let us first focus on improving the approximation scheme with additive-error using our method. Since $P_{i}(\alpha_i,V_i,s,\gamma)$'s are nonnegative distributions for a Gaussian input state, the modified negativity bound is given by ${\cal M}'_{\rightarrow}=\prod^M_{j=1}\max_{\beta_j} \left|f_j(\beta_j,\Pi_{m_j},s,\gamma)\right|$. The advantage of our method is manifest especially when ${\cal M}_\rightarrow$ grows exponentially in the number of modes whereas ${\cal M}'_\rightarrow$ grows at most polynomially in the number of modes (see Fig.~\ref{fig:scheme} (b)). Let us present a simple example for which our method works well. Consider an $M$-mode identical pure squeezed vacuum states input with squeezing parameter $r>0$ and all single-photon outcomes $\bm{m}=(1,\dots,1)$. Since the negativity of the photon number measurement operator is monotonically decreasing with growing $s$, we choose $s=s_{\max}=e^{-2r}$~\cite{lee1991measure}. Then the negativity bound ${\cal M}_{\rightarrow}$ is exponential in $M$ when the squeezing is high, i.e., $r>\frac{1}{2}\log (2+\sqrt{5})$, because of $\max_{\beta_j}\left| W_{\Pi_{1}}^{(-s_{\max})}(\beta_j) \right| >1$. However, we can shift the Gaussian factor by choosing $\gamma$ as (Appendix~\ref{app:addhaf})
\begin{equation}
    \gamma^*=(1+\tanh{r})\left[ (1+\coth{r})W(1/e)-1\right],
\end{equation}
where $W(x)$ is the Lambert $W$ function. Note that $\gamma^*<0$, which means an inverse Gaussian function acts on the $s$-PQD of the measurement operator. From the fact that $\max_{\beta_j}\left| f_j(\beta_j,\Pi_{1},s_{\max},\gamma^*)\right| < 1$, the modified negativity bound ${\cal M}'_{\rightarrow}$ is exponentially small in $M$ for any squeezing $r$, which renders an efficient approximation with additive-error $\epsilon$ within running time $T=\text{poly}(M,1/\epsilon,\log\delta^{-1})$. 

Furthermore, our method provides an intriguing result on the approximation scheme with multiplicative-error (see Fig.~\ref{fig:scheme} (c)). For instance, we consider an $M$-mode identical squeezed thermal state $(r,n_{\text{th}})$ having high enough classicality $s_{\max}>1$ and all single-photon outcomes $\bm{m}=(1,\dots,1)$. In this case, an appropriate $\gamma>0$ can make $s$-PQD of the measurement operator a log-concave function by adding Gaussian smoothing.

\subsection{Application I: quantum-inspired algorithms for matrix functions}
Permanent and hafnian are important matrix functions in computational complexity. Although computing these matrix functions is generally hard~\cite{valiant1979complexity,aaronson2011linear,grier2016new,rudelson2016hafnians}, there are still efficient methods for matrices that have specific structures or restrictions~\cite{jerrum2004polynomial,barvinok2016computing,cifuentes2016efficient,oh2022classical}. 
Developing algorithms for estimating matrix functions of particular classes of matrices is a highly nontrivial problem and might enable us to understand the hardness of the problem better.

One approach is using quasiprobability representations of matrix functions. In general, there can be several ways to match the matrix functions with the outcome probability of quantum circuits (Fig.~\ref{fig:scheme} (a)). For example, an outcome probability of a linear optical circuit with photon number measurements and a Gaussian input state with zero-displacement having covariance matrix $V$ can be written using a hafnian as~\cite{hamilton2017gaussian}
\begin{equation}\label{eq:haf}
    p(\bm{m})=\frac{\Haf(A_S)}{\bm{m}!\sqrt{|V_Q|}},
\end{equation}
where $V_Q=V+\mathbb{I}_{2M}/2$, $A=\begin{pmatrix} 0 & \mathbb{I}_M \\ \mathbb{I}_M & 0\end{pmatrix}\left(\mathbb{I}_{2M}-V_Q^{-1} \right)$ and $A_S$ is a submatrix of $A$ with repeated rows and columns depending on the detected photons. Meanwhile, if the measurements are threshold detectors, i.e., $\Pi_{\text{off}}=\ket{0}\bra{0}$ and $\Pi_{\text{on}}=\mathds{1}-\ket{0}\bra{0}$, the corresponding probability is given in terms of Torontonian as~\cite{quesada2018gaussian}
\begin{equation}
p(\bm{m}')=\frac{1}{\sqrt{|V_Q|}}\text{Tor}(O_S),
\end{equation}
where $O=\mathbb{I}_{2M}-V^{-1}_Q$ and $\bm{m}'$ is an $M$-element binary vector representing on$/$off measurement outcomes. Therefore, our approximating methods for an outcome probability $p(\bm{m})$ are closely related to estimating the above matrix functions with additive or multiplicative-errors.

\subsubsection{Estimating algorithms with additive-errors}\label{sec:add}
Suppose we have a Gaussian input state as a squeezed thermal state. When the thermal part is absent, corresponding to the standard GBS circuit, we can obtain an algorithm for estimating the absolute square of the hafnian of a complex symmetric matrix:
\begin{theorem}\label{th:haf}(Estimating hafnian)
For an $M \times M$ complex symmetric matrix $R$, one can approximate $|\Haf (R)|^2$ with a success probability $1-\delta$ using the number of samples $O(\log{\delta^{-1}}/\epsilon^2)$ within the additive-error
\begin{equation}
    \epsilon \left( \frac{\lambda_{\max}}{\sqrt{1-2W(1/e)}}\right)^M\simeq \epsilon (1.502\lambda_{\max})^M,
\end{equation}
where $\lambda_{\max}$ is the largest singular value of $R$.
\end{theorem}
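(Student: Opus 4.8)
The plan is to realize $|\Haf(R)|^{2}$ as an explicit prefactor times a single-photon outcome probability of a pure GBS circuit, and then run the estimator based on Eq.~(\ref{eq:general}) with the parameters $s$ and $\gamma$ chosen as in the example above. First, apply the Autonne--Takagi decomposition $R=U\,\text{diag}(\lambda_1,\dots,\lambda_M)\,U^{T}$ with $U$ unitary and $\lambda_1=\lambda_{\max}\ge\cdots\ge\lambda_M\ge0$; we may assume $M$ is even, since otherwise $\Haf(R)=0$ and the claim is trivial. Pick a rescaling parameter $c<1/\lambda_{\max}$, take as the GBS input the product of single-mode squeezed vacua with squeezing parameters $r_i=\text{arctanh}(c\lambda_i)$, and apply this very passive unitary $U$. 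For a pure centered Gaussian state the matrix in Eq.~(\ref{eq:haf}) is block diagonal with blocks $B$ and $B^{*}$, where $B=U\,\text{diag}(\tanh r_i)\,U^{T}=cR$, and $\sqrt{|V_Q|}=\prod_{i=1}^{M}\cosh r_i$. Choosing the all-ones outcome $\bm{m}=\bm{1}$ (so $A_S=A$) and using $\Haf(cR)=c^{M/2}\Haf(R)$ gives
\begin{equation}
 p(\bm{1})=\frac{|\Haf(cR)|^{2}}{\sqrt{|V_Q|}}=\frac{c^{M}}{\prod_{i}\cosh r_i}\,|\Haf(R)|^{2},
\end{equation}
so that $|\Haf(R)|^{2}=c^{-M}\big(\prod_{i}\cosh r_i\big)\,p(\bm{1})$.

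Second, estimate $p(\bm{1})$ with the scheme of Eq.~(\ref{eq:general}): with $N=O(1/\epsilon^{2})$ samples one obtains additive error $\epsilon\,{\cal M}'_{\rightarrow}$ on $p(\bm{1})$, hence additive error $\epsilon\,c^{-M}\big(\prod_{i}\cosh r_i\big){\cal M}'_{\rightarrow}$ on $|\Haf(R)|^{2}$. Take $s$ to be the largest value compatible with all input modes, $s=\min_i e^{-2r_i}=e^{-2r_{\max}}$; then each $W_{\rho_i}^{(s)}(\alpha_i)e^{\gamma|\alpha_i|^{2}}$ is a nonnegative function (a genuine Gaussian, or a $\delta$-line in one quadrature for the most squeezed mode), so ${\cal M}_{P_i}=1$ and
\begin{equation}
 {\cal M}'_{\rightarrow}=\prod_{j=1}^{M}\max_{\beta_j}\big|f_j(\beta_j,\Pi_1,s,\gamma)\big|=\prod_{j=1}^{M}{\cal N}_j\,h(s,\gamma),
\end{equation}
where $h(s,\gamma)\coloneqq\max_{\beta}\big|\pi W_{\Pi_1}^{(-s)}(\beta)e^{-\gamma|\beta|^{2}}\big|$ and ${\cal N}_j$ is the finite Gaussian normalization of $P_j$. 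Multiplying by the prefactor, the base of the resulting exponential factorizes over modes as $\prod_{i}\big(\tfrac{\cosh r_i}{c}{\cal N}_i\,h(s,\gamma)\big)$; it then suffices (by monotonicity in the squeezing) to bound a single factor in the worst case where every $\lambda_i$ equals $\lambda_{\max}$, which with $\tanh r=c\lambda_{\max}$ equals $\lambda_{\max}\,\tfrac{\cosh^{2}r}{\sinh r}\,{\cal N}(r)\,h(e^{-2r},\gamma)$.

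Finally, insert the explicit single-mode expressions, $W_{\Pi_1}^{(-s)}(\beta)=\tfrac{2(s-1)}{\pi(s+1)^{2}}\big(1-\tfrac{4|\beta|^{2}}{1-s^{2}}\big)e^{-2|\beta|^{2}/(s+1)}$ and ${\cal N}(r)=(1-\gamma\sinh 2r)^{-1/2}$ at $s=e^{-2r}$, set $\gamma=\gamma^{*}$ as in the worked example (equivalently, optimize $\gamma$ freely), and minimize $\tfrac{\cosh^{2}r}{\sinh r}\,{\cal N}(r)\,h(e^{-2r},\gamma^{*})$ over $r>0$. Performing the extremization of $h$ over $\beta$ and then over $r$ in closed form, the infimum equals $1/\sqrt{1-2W(1/e)}$, which yields the stated error $\epsilon\big(\lambda_{\max}/\sqrt{1-2W(1/e)}\big)^{M}\simeq\epsilon\,(1.502\,\lambda_{\max})^{M}$ (the ``$\simeq$'' reflecting that the optimum is approached in a limiting regime rather than attained). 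The step I expect to be the main obstacle is exactly this final optimization --- extracting the closed-form constant $1/\sqrt{1-2W(1/e)}$ --- together with the bookkeeping needed to certify that $\gamma^{*}$ stays in the admissible window $(-\gamma_R,\gamma_F)$ throughout, that the $s\to s_{\max}$ delta-function limit of the input PQD keeps ${\cal M}_{P_i}=1$ and produces the stated ${\cal N}(r)$, and that replacing each $\lambda_i$ by $\lambda_{\max}$ is a legitimate upper bound.
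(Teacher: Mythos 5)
Your proposal follows essentially the same route as the paper's own proof (Appendix~\ref{app:addhaf}): Takagi decomposition with rescaling to embed $|\Haf(R)|^2$ into the all-single-photon probability of a squeezed-vacuum GBS, choosing $s=s_{\max}=e^{-2r_{\max}}$ so the input factors are normalized nonnegative distributions, shifting the Gaussian weight with the Lambert-$W$-balanced $\gamma^*$, and reducing to the worst case $\lambda_j=\lambda_{\max}$. The obstacles you flag are exactly the computations the paper carries out and they go through as you anticipate --- the balanced $\gamma^*$ stays admissible once one allows both forward and reverse shifting (the two regimes $\lambda_{\max}\lessgtr W(1/e)/(1-W(1/e))$), the per-mode bound is monotone in $\lambda_j$, and the optimized single-mode factor is in fact equal to $\lambda_{\max}/\sqrt{1-2W(1/e)}$ independently of the rescaling, rather than only in a limiting regime.
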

\begin{proof}
First, we embed the hafnian of a complex symmetric matrix to an outcome probability of a GBS circuit after rescaling the matrix so that its singular values are on the interval $[0,1)$.
More specifically, when the input is an $M$-mode product of pure squeezed states with the squeezing parameters $\{r_i\}_{i=1}^M$, the probability of a GBS circuit with all single-photon outcome $\bm{m}=(1,\dots,1)$ is $p_{\text{sq}}=\frac{1}{\cal Z}\left| \text{Haf}(R)\right|^2$ with ${\cal Z}=\prod_{i=1}^{M}\cosh{r_i}$, $R=UDU^T$, and $D=\oplus^M_{i=1} \tanh{r_i}$~\cite{hamilton2017gaussian}. Since any complex symmetric matrix can be decomposed as $UDU^T$~\cite{bunse1988singular} with a unitary matrix $U$, our algorithm can be applied to general complex symmetric matrices. 
Meanwhile, this probability can be also written by using $s$-PQDs in the form of Eq.~(\ref{eq:modified}) such as
\begin{align}
    &p_{\text{sq}}=\int d^{2M}\bm{\alpha} \prod^{M}_{i=1}\frac{1}{{\cal N}^{\text{sq}}_i}W^{(s)}_{V_{\text{sq},i}}(\alpha_i)e^{\gamma|\alpha_i|^2} \nonumber\\
    &\times \prod_{j=1}^{M} {\cal N}^{\text{sq}}_j\frac{8|\beta_j|^2+2(s^2-1)}{(s+1)^3} e^{-\left(\frac{2}{s+1}+\gamma \right)|\beta_j|^2} \nonumber \\
    &\coloneqq \int d^{2M}\bm{\alpha} \prod^M_{i=1}  P_{\text{sq},i}(\alpha_i,r_i,s,\gamma)\prod^M_{j=1} f_{\text{sq},j}(\beta_j,r_j,s,\gamma)
\end{align}
where $V_{\text{sq},i}$ is the covariance matrix of the squeezed state on the $i$th mode and ${\cal N}^{\text{sq}}_i$'s are the normalization factors for  $P_{\text{sq},i}(\alpha_i,r_i,s,\gamma)$'s. Note that $\gamma \in (-\frac{2}{s+1},\frac{2}{e^{2r_{\max}}-s})$ for a given $s$ and $r_{\max}\coloneqq \max_i r_i$.
An appropriate choice of $\gamma$~(see Appendix~\ref{app:addhaf}) gives an upper bound on $|f_{\text{sq},j}(\beta_j,r_j,\gamma,s)|$: 
\begin{align}
    \left| 
    f_{\text{sq},j}(\beta_j,r_j,\gamma,s) \right|
    &\leq \frac{\lambda_{\max}^2\sqrt{1-\lambda_j^2}}{\sqrt{\lambda_{\max}^2(1-W(1/e))^2-\lambda_j^2 W(1/e)^2}},
\end{align}
where $\lambda_j=\tanh{r_j}$, and $\lambda_{\max} \coloneqq \max_j \lambda_j$. Then by Hoeffding inequality~\cite{hoeffding1994probability},
\begin{equation}
    \text{Pr}(||\text{Haf}(R)|^2-{\cal Z}\mu|\geq {\cal Z}\epsilon) \leq 2\exp{\Big(-\frac{N\epsilon^2}{2C^{2M}}\Big)},
\end{equation}
where ${\cal Z}=\prod^M_{i=1}1/\sqrt{1-\lambda_i}$ and $C=\max_j|f_{\text{sq},j}(\beta_j,r_j,\gamma,s)|$. Given the success probability of the estimation $1-\delta$ and the number of samples $O(\log{\delta^{-1}/ \epsilon^2})$, we arrive at the result by substituting $\lambda_{j}=\lambda_{\max}$ to obtain an upper bound.
\end{proof}
Note that the above algorithm gives the finest precision, to the best of our knowledge, i.e., $\epsilon(e\lambda_{\max})^M$ in Ref.~\cite{oh2022quantum}. However, the error is still larger than what we need for the hardness conjecture in GBS~\cite{doi:10.1126/sciadv.abi7894}, so it does not lead to a contradiction. Next, if the input of the GBS circuit is a thermal state, we have an algorithm for the permanent of an HPSD matrix:

\begin{theorem}\label{th:per}(Estimating permanent of HPSD matrices)
For an $M \times M$ HPSD matrix $B$, one can approximate $\Per (B)$ with a success probability $1-\delta$ using the number of samples $O(\log{\delta^{-1}}/\epsilon^2)$ within the error
\begin{equation}
    \epsilon \prod_{i=1}^M \frac{4\lambda_{\max}^2}{e(2\lambda_{\max}-\lambda_i)},
\end{equation}
where $\lambda_i$ are singular values of the matrix $B$ and $\lambda_{\max}$ is the largest one.
\end{theorem}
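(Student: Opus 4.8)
The plan is to follow the proof of Theorem~\ref{th:haf} almost verbatim, with the squeezed-vacuum inputs replaced by thermal states. First I would embed $\Per(B)$ into an outcome probability of a thermal-input GBS circuit. By the spectral theorem write $B=U\Lambda U^\dagger$ with $\Lambda=\mathrm{diag}(\lambda_1,\dots,\lambda_M)$, pick a scale $\kappa>\lambda_{\max}$, and feed an $M$-mode product of thermal states with mean photon numbers $\bar n_i=\lambda_i/(\kappa-\lambda_i)$ through the interferometer $U$. Specializing Eq.~\eqref{eq:haf} to thermal inputs --- for which the matrix $A$ becomes block-anti-diagonal and the hafnian reduces to the permanent of an HPSD matrix~\cite{hamilton2017gaussian,rahimi2016sufficient} --- the all-single-photon outcome $\bm{m}=(1,\dots,1)$ has probability $p(\bm{m})=\Per(C)/\prod_{i}(1+\bar n_i)$ with $C=U\,\mathrm{diag}\bigl(\bar n_i/(1+\bar n_i)\bigr)U^\dagger=B/\kappa$. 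Hence $\Per(B)=\bigl[\kappa^{2M}/\prod_i(\kappa-\lambda_i)\bigr]\,p(\bm{m})$, and it remains to estimate $p(\bm{m})$.

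The reason this case is easier than Theorem~\ref{th:haf} is that the classicality of a thermal state obeys $s_{\max}\ge1$, so the shifting parameter is not even needed: I would write $p(\bm{m})$ in the form of Eq.~\eqref{eq:general} with $s=1$ and $\gamma=0$. Then the input $s$-PQDs are the Glauber--Sudarshan $P$-functions $P_i(\alpha_i)=(\pi\bar n_i)^{-1}e^{-|\alpha_i|^2/\bar n_i}$, which are genuine (nonnegative) probability densities, so $\mathcal M_{P_i}=1$, while the measurement factors become the $Q$-functions $f_j(\beta_j)=\pi W^{(-1)}_{\Pi_1}(\beta_j)=|\beta_j|^2e^{-|\beta_j|^2}$, whence $\max_{\beta_j}|f_j(\beta_j)|=1/e$. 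Therefore the circuit's negativity bound is $\mathcal M_{\rightarrow}=\prod_i\mathcal M_{P_i}\prod_j\max_{\beta_j}|f_j|=e^{-M}$, and the Monte-Carlo estimator of Ref.~\cite{pashayan2015estimating} returns $p(\bm{m})$ within additive error $O(\epsilon\,e^{-M})$ with success probability $1-\delta$ using $N=O(\epsilon^{-2}\log\delta^{-1})$ samples.

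Putting the two steps together, $\Per(B)$ is estimated within error $\bigl[\kappa^{2M}/\prod_i(\kappa-\lambda_i)\bigr]\cdot\epsilon\,e^{-M}=\epsilon\prod_i\kappa^2/\bigl(e(\kappa-\lambda_i)\bigr)$. I would then set $\kappa=2\lambda_{\max}$, which is admissible because it makes every $\bar n_i/(1+\bar n_i)=\lambda_i/(2\lambda_{\max})\le1/2$, and which moreover minimizes the prefactor $\prod_i\kappa^2/(\kappa-\lambda_i)$ when the spectrum is flat; this yields exactly the claimed error $\epsilon\prod_i 4\lambda_{\max}^2/\bigl(e(2\lambda_{\max}-\lambda_i)\bigr)$. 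One is in fact free to trade the scale against a nonzero shift: the family $\kappa=2\eta\lambda_{\max}$, $\gamma=\eta-1$ with $\eta>1/2$, $s=1$ all produce the same bound.

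The main difficulty is not analytic but bookkeeping: pinning down the exact normalization of the embedding $\Per(B)\leftrightarrow p(\bm{m})$ (the $\sqrt{|V_Q|}$ of Eq.~\eqref{eq:haf}, the thermal decay factors, and the total power of $\kappa$), and then confirming that the chosen $s$ and $\gamma$ keep Eq.~\eqref{eq:general} valid, i.e.\ $\gamma\in(-\gamma_R,\gamma_F)$ and all $P_i$ remain bona fide densities, so that Ref.~\cite{pashayan2015estimating} applies directly. A minor point to check separately is a singular $B$ (some $\lambda_i=0$): the corresponding mode is then vacuum, whose $P$-function is the nonnegative distribution $\delta^2(\alpha_i)$, so nothing changes, and $2\lambda_{\max}-\lambda_i=2\lambda_{\max}>0$ keeps the bound finite.
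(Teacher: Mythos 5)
Your proposal is correct and reaches exactly the claimed bound, but it takes a noticeably different (and simpler) route than the paper's proof. The paper also embeds $\Per(B)$ into a thermal-input GBS probability of the all-single-photon outcome and applies the Hoeffding-type estimator of Ref.~\cite{pashayan2015estimating}, but it rescales only by $a\lambda_{\max}$ with $a>1$ essentially arbitrary and then extracts the per-mode constant by optimizing the measurement factor over the shift $\gamma$ at $s=s_{\max}=2n_{\min}+1$, which forces a case analysis (forward versus reverse shifting, different ranges of $\lambda_{\max}$, and separately $\lambda_{\min}=0$ versus $\lambda_{\min}>0$); the theorem's bound is the $\lambda_{\min}=0$ instance of that computation. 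You instead fix $s=1$ and $\gamma=0$, so the inputs are bona fide $P$-distributions with negativity $1$ (the vacuum/delta case included), the measurement factors are the Fock $Q$-functions with $\max_\beta|\beta|^2e^{-|\beta|^2}=1/e$, and the entire optimization is pushed into the embedding scale $\kappa=2\lambda_{\max}$, giving $\epsilon\prod_i\kappa^2/\bigl(e(\kappa-\lambda_i)\bigr)$ directly. Your closing observation that the family $\kappa=2\eta\lambda_{\max}$, $\gamma=\eta-1$ reproduces the same bound is also correct (one can check the normalization factors cancel the $\eta$-dependence), and it explains why the paper's answer is independent of $a$: at $s=1$ the Gaussian shift acting on thermal inputs is equivalent to a rescaling of the matrix. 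What the paper's heavier machinery buys is generality --- the $s,\gamma$ optimization is what carries over to squeezed inputs (Theorem~\ref{th:haf}, where $s_{\max}<1$ and the shift is genuinely needed) and to the sharper $\lambda_{\min}>0$ bound of Eq.~(\ref{eq:per2}) --- while your argument is the cleaner derivation for this particular theorem.
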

\begin{proof}
This corresponds to the case where the input is an $M$-mode product of thermal states with the average photon numbers $\{n_i\}_{i=1}^M$. The probability of all single-photon outcomes matches the permanent of an HPSD matrix $B$ such as $p_{\text{th}}=\frac{1}{\cal Z'}\Per(B),
$ where ${\cal Z'}=\prod_{i=1}^M (1+n_i)$, $B=UDU^{\dagger}$, and $D=\text{diag}\{ \frac{n_1}{n_1+1},\dots,\frac{n_M}{n_M+1}\}$. Then we obtain the result by a similar procedure in the hafnian case. A detailed derivation is given in Appendix~\ref{app:addper}.
\end{proof}
The precision of our result outperforms the best-known one~\cite{chakhmakhchyan2017quantum} because the latter is a special case of ours when $s=1$ and $\gamma=0$~(see Fig.~\ref{fig:scheme} (b)). Specifically, when $\lambda_{\max} \in (0,1/2)$, our method's precision is better than the previous one (see Appendix~\ref{app:addper}).  Moreover, when the input is a squeezed thermal state, we obtain an algorithm for the hafnian of a structured matrix (Appendix~\ref{app:addhaf}). Similarly, we provide algorithms for the Torontonian of some structured matrices within additive-error by substituting the photon number measurement by a threshold detector. The detailed results are in Appendix~\ref{app:addtor}.

\subsubsection{Estimating algorithms with multiplicative-errors}\label{sec:mul}
Recall that we have an FPRAS when the estimate function is log-concave. 
Thus our goal is to make the estimate function log-concave by controlling the parameters $s$ and $\gamma$ in our scheme. 
For the permanent of an HPSD matrix, this is possible when $\lambda_{\max}/\lambda_{\min}\leq 2$ (a proof in Appendix~\ref{app:mulper}), which reproduces the existing result of Ref.~\cite{barvinok2021remark}
in the case $1\leq \lambda_i \leq 2$ after a normalization. Especially for an HPSD matrix with $\lambda_{\min}=0$, estimating the permanent within a multiplicative-error is NP-hard~\cite{meiburg2021inapproximability}; thus $\lambda_i>0$ is the crucial condition for the efficient approximation.    
We emphasize that our formulation comes from a physical setup, which is essentially different from the method in Ref.~\cite{barvinok2021remark}, where the technique is restricted to the properties of the permanent of a positive definite matrix. Thus our result can be readily extended to a more general situation other than the permanent of a positive definite matrix. When the input state is a product of squeezed thermal states $\{r_i,n_i\}_{i=1}^M$, we have an FPRAS for the hafnian of a matrix having a specific form, such that $2 \times 2$ block matrix whose diagonal elements are symmetric matrices and off-diagonal elements are HPSD matrices. 
\begin{theorem}\label{th:FPRAS}(FPRAS for hafnian)
Suppose we have a block matrix $A=\begin{pmatrix} R & B \\ B^T & R^*\end{pmatrix}$ with an $M \times M$ complex symmetric matrix $R$ and an $M \times M$ HPSD matrix $B$, which have decompositions by a unitary matrix $U$ as $UDU^T$ and $UD'U^{\dagger}$, respectively, with
\begin{align}\label{eq:acondition1}
D&=\bigoplus^M_{i=1}\frac{(1+2n)\sinh{2r_i}}{1+2n(1+n)+(1+2n)\cosh{2r_i}},\\ 
    D'&=\bigoplus^M_{i=1}\frac{2n(1+n)}{1+2n(1+n)+(1+2n)\cosh{2r_i}},\label{eq:acondition2}
\end{align}
where $n=n_i$ for all $i$ and $n, r_i \geq 0$. Then $\Haf(A)$ can be approximated by FPRAS when the parameters satisfy a condition as 
\begin{equation}
    n \geq \frac{1}{4} \left(6 \sinh (2 r_{\max})+\sqrt{18 \cosh (4 r_{\max})-14}-2\right),
\end{equation}
where $r_{\max}=\max_i r_i$.
\end{theorem}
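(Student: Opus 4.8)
The plan is to realize $\Haf(A)$, for $A$ as in the theorem, as the all-single-photon outcome probability of a GBS circuit whose input is a product of squeezed thermal states with common temperature $n$ and squeezings $\{r_i\}$, to rewrite that probability in the generalized $s$-PQD form of Eq.~\eqref{eq:general}, and then to choose the pair $(s,\gamma)$ so that the $2M$-dimensional integrand becomes a nonnegative, integrable, log-concave density. The FPRAS then follows at once from the polynomial-time log-concave integration algorithm of Refs.~\cite{lovasz2007geometry,barvinok2021remark}, as in the multiplicative-error discussion above.

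First I would carry out the Gaussian-state bookkeeping. A single-mode squeezed thermal state with occupation $n$ and squeezing $r_i$ has covariance $V_i=\tfrac12(2n+1)\,\mathrm{diag}(e^{2r_i},e^{-2r_i})$; passing the product $\bigoplus_i V_i$ through the passive unitary $U$ and applying Eq.~\eqref{eq:haf} yields an $A$-matrix of precisely the stated block form with $R=UDU^{T}$, $B=UD'U^{\dagger}$, and $D,D'$ the diagonal matrices of Eqs.~\eqref{eq:acondition1}--\eqref{eq:acondition2} (a direct computation; the special cases $n=0$ and $r_i=0$ recover the pure-squeezed and thermal inputs of Theorems~\ref{th:haf} and~\ref{th:per}). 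Hence $\Haf(A)=\mathcal{Z}''\,p_{\mathrm{sqth}}(\bm{m}=(1,\dots,1))$ for an explicit, efficiently computable positive constant $\mathcal{Z}''$ (the $\sqrt{|V_Q|}$ normalization), so an FPRAS for $p_{\mathrm{sqth}}$ gives one for $\Haf(A)$. Because the detector is unchanged from the pure-squeezed case, the measurement factors coincide with those of Eq.~\eqref{eq:haf2}, $f_j(\beta_j)=\mathcal N_j\,\tfrac{2}{(s+1)^3}\!\left(4|\beta_j|^2+s^2-1\right)e^{-\left(\frac{2}{s+1}+\gamma\right)|\beta_j|^2}$, while each $P_i$ is the reshaped Gaussian $\tfrac{1}{\mathcal N_i}W^{(s)}_{V_i}(\alpha_i)e^{\gamma|\alpha_i|^2}$.

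Next I would determine when the integrand $\bm\alpha\mapsto\prod_i P_i(\alpha_i)\prod_j f_j(\beta_j)$, with $\bm\beta$ a unitary image of $\bm\alpha$, is nonnegative and log-concave. Since products of log-concave functions and compositions with real-linear maps preserve log-concavity, it suffices that each $P_i$ and each $f_j$ be so. The factor $P_i$ is a genuine Gaussian (hence log-concave and integrable) iff $s<s_{\max}=(2n+1)e^{-2r_{\max}}$ and $\gamma<\gamma_F=\tfrac{2}{(2n+1)e^{2r_{\max}}-s}$. For $f_j$ I would first demand $s>1$ so that $4|\beta_j|^2+s^2-1>0$ --- this is exactly why squeezed thermal input is needed, since only for sufficiently large $n$ does $s_{\max}$ exceed $1$ --- and then invoke the elementary radial lemma that, for $a,b,c>0$, the function $(a+b\rho^2)e^{-c\rho^2}$ on $\mathbb R^2$ is log-concave iff $c\ge b/a$, the one subtlety being that $\log(a+b\rho^2)$ has strictly positive curvature $2b/a$ at the origin, which the Gaussian envelope must dominate. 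With $a=s^2-1$, $b=4$, $c=\tfrac{2}{s+1}+\gamma$ this becomes $\gamma\ge\tfrac{2(3-s)}{(s-1)(s+1)}$; the common envelope $e^{-c\sum_j|\beta_j|^2}=e^{-c\sum_i|\alpha_i|^2}$ then makes the product integrable.

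Finally I would verify feasibility. A valid $\gamma$ exists iff $\tfrac{2(3-s)}{(s-1)(s+1)}<\tfrac{2}{(2n+1)e^{2r_{\max}}-s}$, which for $1<s<3$ reads $(2n+1)e^{2r_{\max}}<\tfrac{3s-1}{3-s}$; combining this with $s<s_{\max}$, i.e. $se^{2r_{\max}}\le 2n+1$, the admissible $s$-interval is nonempty exactly when $q:=2n+1$ exceeds the unique positive root of $t q^2-3(t^2-1)q-t$ with $t:=e^{2r_{\max}}$, that is, $2n+1>3\sinh 2r_{\max}+\tfrac12\sqrt{18\cosh 4r_{\max}-14}$, which rearranges to exactly the stated lower bound on $n$. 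For any such $n$ one picks $(s,\gamma)$ in the interior of the feasible region, obtains a nonnegative, integrable, log-concave integrand, and applies the log-concave integration FPRAS to estimate $p_{\mathrm{sqth}}(\bm{m})$, and hence $\Haf(A)=\mathcal{Z}''\,p_{\mathrm{sqth}}(\bm{m})$, within multiplicative error. I expect the crux to be the $f_j$ log-concavity step: pushing the $s$-PQD parameter into the regime $s>1$ (available only because a hot enough squeezed thermal state has $s_{\max}>1$) and then balancing the polynomial prefactor's positive curvature against the Gaussian tail is precisely what forces the closed-form threshold once $s$ is optimized.
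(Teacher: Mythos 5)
Your proposal is correct and takes essentially the same route as the paper: embedding $\Haf(A)$ as the all-single-photon probability of a GBS circuit with squeezed-thermal input, shifting the Gaussian factor, applying the $(a+b\,q(x))e^{-c\,q(x)}$ log-concavity criterion (the paper's Lemma~\ref{lem2}) to the measurement factors, and invoking the log-concave-integration FPRAS, with your feasibility analysis over $(s,\gamma)$ reproducing the threshold that the paper obtains by directly setting $s=s_{\max}$ and the maximal shift. The only minor difference is that by insisting on the open region $s<s_{\max}$, $\gamma<\gamma_F$ you establish the result for $n$ strictly above the threshold, so the equality case needs the paper's boundary choice (or a limiting/continuity argument), a point the paper itself passes over silently.
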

\begin{proof}
See Appendix~\ref{app:FPRAShaf}.
\end{proof}

We also consider an on/off measurement instead of the photon number measurement, which corresponds to the Torontonian (detailed analysis is given in Appendix~\ref{app:FPRAStor}). 

Moreover, we give nontrivial lower and upper bounds on the values of various matrix functions including the permanent and hafnian by adjusting the parameter $s$, which have independent interests (Appendix~\ref{app:bounds})~\cite{gurvits2014bounds}. 

\subsection{Application II: Sparse Gaussian boson sampling}\label{sec:sparse}
Our algorithm has an interesting application to the simulation of GBS. Recall that if the input is a classical state, the corresponding GBS can be efficiently simulated ~\cite{rahimi2016sufficient}. In our language, this is the case when $s_{\max}\geq 1$. For a non-classical input state ($s_{\max} < 1$), however, a classically efficient simulation may not be possible from the hardness of GBS. Nevertheless, we can approximate any (marginal) outcome probability of the circuit using our algorithm under certain conditions. Then one can simulate the GBS when its output distribution is poly-sparse, in which a probability distribution with polynomially many of the most likely outcomes well approximates the true distribution~\cite{schwarz2013simulating,pashayan2020estimation}.
\begin{theorem}\label{th:GBS}(Estimating outcome probabilities of GBS) For a lossy GBS circuit with squeezing  $\{r_i\}_{i=1}^M$  and a transmissivity $\eta$, one can efficiently approximate any (marginal) outcome probability within $1/\text{poly}(M)$ additive-error when $\eta e^{-2r_{\max}}+1-\eta \geq \sqrt{5}-2$ with $r_{\max}=\max_i r_i$.
\end{theorem}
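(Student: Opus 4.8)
\emph{Proof sketch.---}
The plan is to realize the lossy GBS probability as an outcome probability of a lossless GBS circuit with mixed Gaussian inputs, choose the quasiprobability parameter $s$ equal to the circuit classicality so that the input side carries no negativity, and then use the shift freedom $\gamma$ of Eq.~(\ref{eq:general}) to keep the measurement side's negativity bounded, in direct analogy with the pure squeezed-vacuum example of the main text.

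First I would fold the loss into the state: a single-mode squeezed vacuum with parameter $r_i$ sent through a transmissivity-$\eta$ channel is the centered Gaussian state with covariance matrix $V_i=\tfrac12\,\mathrm{diag}\!\big(\eta e^{2r_i}+1-\eta,\ \eta e^{-2r_i}+1-\eta\big)$, so the lossy GBS outcome probability equals a lossless GBS probability with product input $\otimes_i V_i$ under the same passive unitary, and a marginal corresponds to tracing out some output modes. By Eq.~(\ref{eq:spqd}) the classicality of mode $i$ is the value of $s$ at which $V_i-\tfrac{s}{2}\mathbb{I}_2$ degenerates, namely $s_{\max}^{(i)}=\eta e^{-2r_i}+1-\eta$, which decreases in $r_i$; hence the circuit classicality is $s_{\max}=\min_i s_{\max}^{(i)}=\eta e^{-2r_{\max}}+1-\eta$, and the hypothesis reads $s_{\max}\ge\sqrt5-2$.

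I would then apply Eq.~(\ref{eq:general}) with $s=s_{\max}$. For this value every input $s$-PQD $W^{(s)}_{V_i}$ is a genuine probability density---a Gaussian, degenerating to a delta function on the $r_{\max}$ mode---so $P_i\ge0$ and ${\cal M}_{P_i}=1$ for every admissible $\gamma\in\big(-\tfrac{2}{1+s_{\max}},\,\tfrac{2}{a_{\max}-s_{\max}}\big)$ with $a_{\max}=\eta e^{2r_{\max}}+1-\eta$, and hence ${\cal M}'_{\rightarrow}=\prod_j\max_{\beta_j}\bigl|f_j(\beta_j,\Pi_{\nu_j},s_{\max},\gamma)\bigr|$: only the measurement factors must be controlled. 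For a mode with outcome $\nu_j\ge1$, the factor at $\gamma=0$ is $\pi\max_{\beta_j}\bigl|W^{(-s_{\max})}_{\Pi_{\nu_j}}(\beta_j)\bigr|$, which is at most $1$ precisely when $s_{\max}\ge\sqrt5-2$ (and strictly smaller for $\nu_j\ge2$); for a vacuum outcome $\nu_j=0$ the factor equals $\tfrac{2{\cal N}_j}{1+s_{\max}}$, exceeding $1$ at $\gamma=0$ but reducible to $1$ by taking $\gamma$ suitably negative, which shrinks ${\cal N}_j$, while the slack in $s_{\max}\ge\sqrt5-2$ keeps the $\nu_j\ge1$ factors below $1$ under the same $\gamma$; for a traced-out mode one sums the measurement PQD over its outcomes using $\pi\sum_\nu W^{(-s_{\max})}_{\Pi_\nu}=1$, and norm preservation $\sum_i|\alpha_i|^2=\sum_j|\beta_j|^2$ lets the residual Gaussian factor be reabsorbed into the positive input measure, so such a mode contributes only a constant.

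Establishing this uniform-in-outcome bound---balancing the few-photon modes against the vacuum and traced-out modes with a single admissible $\gamma$ while controlling the coupled normalizations $\{{\cal N}_j\}$---is where the constant $\sqrt5-2$ enters and is the step I expect to be the main obstacle; for $\eta=1$ it collapses to the marginal version of the squeezed-vacuum example already treated in the main text. Granting ${\cal M}'_{\rightarrow}\le\mathrm{poly}(M)$ (in fact $\le1$ for the relevant $\gamma$), the estimator of Ref.~\cite{pashayan2015estimating} needs only $N=O\big(\mathrm{poly}(M)\,\epsilon^{-2}\log\delta^{-1}\big)$ samples, each drawn from the explicit product Gaussian $\prod_i P_i$ and evaluated on $\prod_j f_j$ in $\mathrm{poly}(M)$ time; taking $\epsilon=1/\mathrm{poly}(M)$ then yields a $1/\mathrm{poly}(M)$-additive-error estimate of any (marginal) outcome probability in polynomial time.
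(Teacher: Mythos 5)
Your first half matches the paper's route: folding the loss into the covariance matrix $V_{\eta,i}$, choosing $s=s_{\max}=\eta e^{-2r_{\max}}+1-\eta$ so that every input $s$-PQD is a genuine (possibly degenerate) Gaussian probability density, reading off $s_{\max}\geq\sqrt5-2$ from $\max_\beta|\pi W^{(-s_{\max})}_{\Pi_1}(\beta)|\leq 1$, and using $\max_\beta|W^{(-s)}_{\Pi_m}|\leq\max_\beta|W^{(-s)}_{\Pi_1}|$ for $m\geq2$, $s\geq0$. But the step you yourself flag as ``the main obstacle'' --- handling the vacuum-outcome and traced-out modes by a suitably negative global $\gamma$ --- is exactly the missing piece, and it does not go through as described. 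First, $\gamma$ is a single parameter applied to all modes (the telescoping $\sum_i|\alpha_i|^2=\sum_j|\beta_j|^2$ only works for a uniform shift), so with $\gamma<0$ every traced-out mode carries the factor $\mathcal{N}_j e^{|\gamma||\beta_j|^2}$, whose supremum is infinite; ``reabsorbing'' it into the input measure is not a bounded-negativity argument but an exact Gaussian integration in disguise, which you do not carry out. Second, there is no slack to trade: at the stated threshold $s_{\max}=\sqrt5-2$ the single-photon factor already attains $1$ at $\gamma=0$, and as $|\gamma|\to 2/(1+s_{\max})$ (the regime needed to fully tame the vacuum factors $\tfrac{2\mathcal{N}_j}{1+s_{\max}}$) the click-mode factors lose their Gaussian decay entirely and become unbounded, since $\pi W^{(-s)}_{\Pi_m}(\beta)e^{|\gamma||\beta|^2}$ is a polynomial times $e^{-(2/(s+1)-|\gamma|)|\beta|^2}$. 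So the balancing you hope for is not established and, at the boundary of the theorem's hypothesis, cannot hold; the constant $\sqrt5-2$ does not ``enter'' there at all.

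The paper's proof avoids this entirely and uses no $\gamma$-shift for this theorem: it keeps $\gamma=0$, and treats zero-photon and marginalized modes by \emph{exact} integration rather than by bounding their negativity. Since $\pi W^{(-s)}_{\Pi_0}(\beta_j)=\tfrac{2}{s+1}e^{-2|\beta_j|^2/(s+1)}$ is a Gaussian and $\pi\sum_{m_j}W^{(-s)}_{\Pi_{m_j}}(\beta_j)=1$, while the input $s$-PQD written in the $\bm\beta=U\bm\alpha$ coordinates is a multivariate normal, the integrals over the $\beta_j$'s belonging to vacuum or traced-out modes can be performed analytically; one then Monte-Carlo estimates only the remaining click modes, each of whose factors is bounded by $1$ under $s_{\max}\geq\sqrt5-2$. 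Replacing your $\gamma$-balancing step by this exact-integration step is what closes the argument.
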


\begin{proof}
Let us first consider a GBS circuit with a product of lossy squeezed input states with a transmissivity $\eta$ having the covariance matrix on $i$th mode as $V_{\eta,i}=\frac{1}{2}\begin{pmatrix}\eta e^{2r_i}+1-\eta & 0 \\ 0 & \eta e^{-2r_i}+1-\eta  \end{pmatrix}$ and a photon number measurement $\Pi_{m_j}=\ket{m_j}\bra{m_j}$. Then an outcome probability of $\bm{m}=(m_1,...,m_M)$ is given by
\begin{align}\label{eq:GBS}
    p(\bm{m})&=\pi^M\int d^{2M}\bm{\alpha} \prod^M_{i=1}  W_{V_{\eta,i}}^{(s)}(\alpha_i)\prod^M_{j=1} W^{(-s)}_{\Pi_{m_j}}(\beta_j).
\end{align}
We take $s=s_{\max}=\eta e^{-2r_{\max}}+1-\eta$. If we examine the probability of all single-photon outcomes $\bm{m}=(1,\dots,1)$,
a condition for an efficient estimation is $\max_{\beta_j}|\pi W^{(-s_{\max})}_{\Pi_1}(\beta_j)|\leq 1$, which leads the restriction $s_{\max}\geq \sqrt{5}-2$. Now we must check whether this condition is valid for other outcomes. From the behavior of $W^{(-s)}_{\Pi_m}(\beta)$, we can find out that 
\begin{equation}
    \max_{\beta} |W^{(-s)}_{\Pi_m}(\beta)| \leq \max_{\beta} |W^{(-s)}_{\Pi_1}(\beta)|,
\end{equation}
for $n \geq 2$ and $s\geq 0$. Lastly,
we consider $\pi W^{(-s)}_{\Pi_0}(\beta_j)$ for zero-photon detection and $\sum_{m_j} \pi W^{(-s)}_{\Pi_{m_j}}(\beta_j)$ for the marginalized probability on mode $j$, where the latter sum equals to one by the normalization condition.
In both cases, the integrals for $\beta_j$'s can be easily computed because $\beta_j$ components in $W^{(s)}_{V_{\eta}}(\bm{\alpha})$ constitute a multivariate normal distribution. Therefore, we can first perform the integrals on $\beta_j$'s corresponding to zero-photon or marginalized ones and estimate the remaining terms.
\end{proof}
Note that the condition in Theorem~\ref{th:GBS} yields  $r_{\max}\leq \frac{1}{2}\log(2+\sqrt{5})\simeq 0.722$ for an ideal GBS ($\eta=1$). However, under photon loss, any squeezed input state is possible when $\eta \leq 3-\sqrt{5}\simeq 0.764$, which is much higher transmissivity than those used in current experiments~\cite{zhong2021phase,madsen2022quantum}. 

We also consider GBS with threshold detectors~\cite{quesada2018gaussian}, whose $s$-PQD for ``click'' is given by
\begin{equation}
\pi W^{(-s)}_{\text{on}}(\beta)= 1-\frac{2}{s+1}e^{-\frac{2}{s+1}|\beta|^2}.
\end{equation}
Since the range of $\pi W^{(0)}_{\text{on}}(\beta)=1-2e^{2|\beta|^2}$ is  $[-1,1]$ in the Wigner representation ($s=0$), it allows any squeezing of the input state even for the lossless case ($\eta=1$). 
Then we can estimate any (marginal) probability by the same argument of the photon-number measurement case. We give a detailed analysis in Appendix~\ref{app:GBSsimul}. 

In both cases, those algorithms of estimating (marginal) probability with inverse-polynomial additive-error precision with the poly-sparsity condition imply classically efficient simulations of GBS. Since it is difficult to expect the sparsity condition for current experiments, our results do not lead to a direct simulation of them. However, it can be useful for application-targeted GBS experiments whose outcomes might
satisfy the poly-sparsity condition.

Furthermore, we might consider an additional source of noise to allow $s_{\max}>1$ by introducing thermal noise with the average photon number $n_{\text{th}}$. In that case, the covariance matrix of $i$th mode input state is  $V_{\eta,n_{\text{th}},i}=\frac{1}{2}\begin{pmatrix}\eta e^{2r_i}+(1-\eta)(2n_{\text{th}}+1) & 0 \\ 0 & \eta e^{-2r_i}+(1-\eta)(2n_{\text{th}}+1)  \end{pmatrix}$.
By the log concavity condition, we can compute any (marginal) probability within a multiplicative-error if the following condition is satisfied:
\begin{equation}\label{eq:crit}
    n_{\text{th}} \geq \frac{e^{-r_{\max}}\eta\sinh{r_{\max}}+\sqrt{1+\eta\sinh{2r_{\max}}}}{1-\eta}>1.
\end{equation}
For instance, if $\eta=0.5$ and $r_{\max}=1$, then $n_{\text{th}} \geq n_{\text{th}}^* \simeq 3.79$, and the minimum value of $n_{\text{th}}^*$ is $1$ when $\eta \rightarrow 0$. 
 
In Fig.~\ref{fig:precision}, we depict our results of approximating probability and simulation for GBS circuits with threshold detectors via the classicality $s_{\max}$. Remarkably, there are two transition points of complexities via the classicality $s_{\text{max}}$. i) $s_{\max}=1$: This point indicates the transition from $\epsilon$-simulation with the sparse condition to exact simulation. Whether $\epsilon$-simulation without sparse condition can exist when $s_{\text{max}}<1$ is an interesting open question. ii) $s_{\text{max}}=s_{\text{m}}$: Note that $s_{\text{m}}$ is the point saturating the inequality Eq.~(\ref{eq:crit}), which indicates the transition of complexities from $\text{BPP}^{\text{NP}}$ to $\text{BPP}$ for approximating the probability within a multiplicative error. 

\begin{figure}[t]
\includegraphics[width=240px]{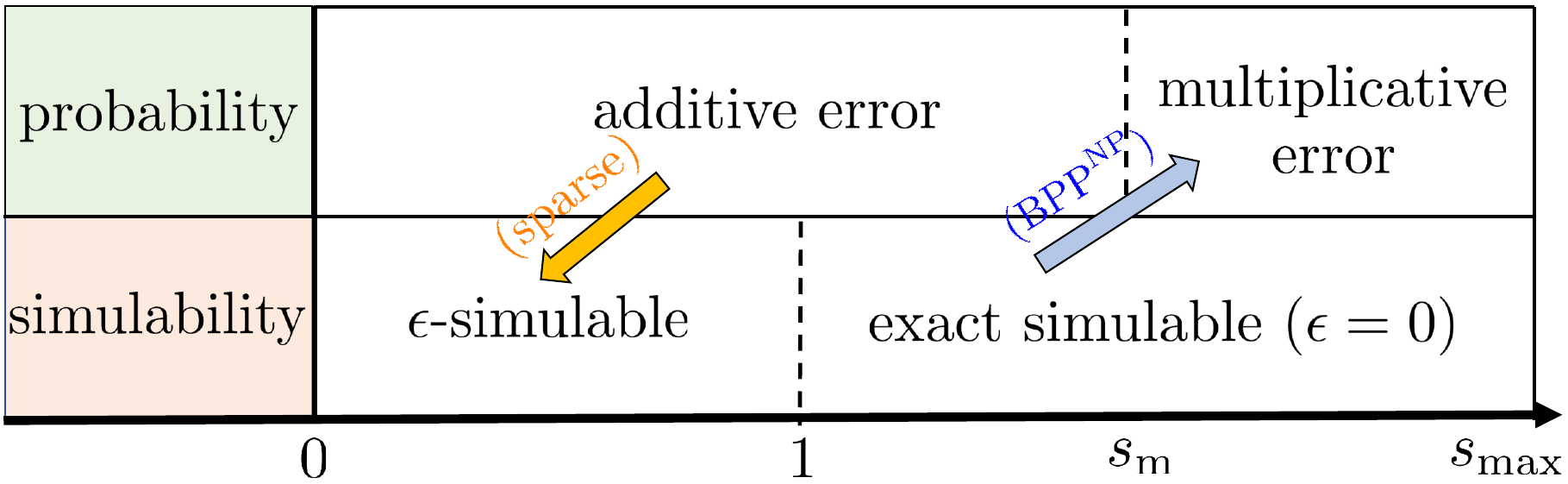}
\caption{Regime of efficient classical algorithms for approximating outcome probabilities and the simulation of a GBS circuit with threshold detectors via the classicality $s_{\max}$. When the output distribution is poly-sparse, an approximate simulation is possible by estimating the probabilities within 1/poly additive-error (orange arrow) ~\cite{schwarz2013simulating}. On the other hand, a multiplicative-error approximation of probability in $\text{BPP}^{\text{NP}}$ is possible by using exact simulation with classical input state $s_{\text{max}}\geq 1$ (blue arrow)~\cite{rahimi2015can}.} 
\label{fig:precision}
\end{figure}

\section{Discussion}
We propose a method for calculating the outcome probability of a linear optical
circuit, by introducing modified functions with a lower negativity bound than the quasi-probability
distribution. This leads to various improved approximating algorithms for the outcome probabilities of the circuit. Furthermore, we suggest an FPRAS using our method modulating $s$-PQDs and the efficient sampling of log-concave functions with multiplicative-error.
Our results provide a helpful tool for controlling the negativity of the circuit in the phase space and interesting quantum-inspired algorithms in computational complexity.

Although we focus on Gaussian input states and photon number or threshold measurements in a linear optical circuit in this work, our scheme can be also applied to other systems, for example, Clifford circuits~\cite{jozsa2013classical} with a dimension of odd prime $d$. If one exploits the symmetry of the transition function in the phase space, a nontrivial approximation algorithm can be rendered for the corresponding matrix function. Since there can be several equivalent choices of circuits for the same matrix function, e.g., permanent of a unitary matrix~\cite{grier2021complexity,chabaud2022quantum}, finding the optimal circuit is still an interesting open problem. After a proper circuit choice, there are other optimization problems, such as choosing quasiprobability representation and optimizing parameters for manipulating them in the phase space. Therefore, there might be more improvements for quantum-inspired algorithms approximating matrix functions.

\section*{data availability}
The data supporting the results of this manuscript are given in the article and the appendix. Extra data are available upon reasonable request.

\section*{acknowledgement}
We thank Abhinav Deshpande and Bill Fefferman for fruitful discussions. Y. L. acknowledges National Research Foundation of Korea a grant funded by the Ministry of Science and ICT (NRF-2020M3E4A1077861), Institute of Information \& Communications Technology Planning \& Evaluation (IITP) grant funded by the Korea government (MSIT) (No. 2022-0-00463), and KIAS Individual Grant
(CG073301) at Korea Institute for Advanced Study. 
C. O. acknowledges support from the ARO MURI (W911NF-21-1-0325) and
NSF (OMA-1936118, ERC-1941583).

\section*{Author Contribution}
Y. L. initiated the idea and Y. L. and C. O. led and analyzed the main results. All authors wrote and reviewed the manuscript.

\section*{Competing interests}
The authors declare no competing interests.

\begin{widetext}
\appendix

\section{Estimation of matrix functions within additive-errors}\label{app:additive}
\subsection{Hafnian (Proof of Theorem~\ref{th:haf})}\label{app:addhaf}
Consider an input of $M$-mode product of pure squeezed vacuum states $\{r_i\}_{i=1}^M$ and the all single-photon outcomes $\bm{m}=(1,...,1)$. From Eq.~(\ref{eq:haf}), 
\begin{equation}\label{eq:haf2}
    p_{\text{sq}}=\frac{1}{\cal Z}\left| \text{Haf}(R')\right|^2,
\end{equation}
where ${\cal Z}=\prod_{i=1}^{M}\cosh{r_i}$, $R'=UDU^T$, and $D=\oplus^M_{i=1} \tanh{r_i}$. Since any complex symmetric matrix can be decomposed by Takagi decomposition as $UDU^T$~\cite{bunse1988singular}, the only restriction is the magnitude of singular values $\lambda_i=\tanh{r_i} \in [0,1)$.

For a given complex symmetric matrix $R$, we can construct a quantum circuit, the probability of which is expressed as its hafnian. To do that, first rescale the matrix with the largest singular value $\lambda_{\max}$ as $R'=R/(a\lambda_{\max})$ with $a>1$, and find the Takagi decomposition of $R'$ as $R'=UDU^T$ so that a GBS probability is matched to $|\text{Haf}(R')|^2$. From Eq.~(\ref{eq:haf2}),
\begin{equation}
    |\text{Haf}(R)|^2=(a\lambda_{\max})^{M}|\text{Haf}(R')|^2=(a\lambda_{\max})^{M}{\cal Z}p_{\text{sq}}.
\end{equation}
If an estimator of $|\text{Haf}(R)|^2$ lies in the interval $[-C^M,C^M]$, by Hoeffding inequality~\cite{hoeffding1994probability},
\begin{equation}\label{eq:hafestimation}
\text{Pr}(||\text{Haf}(R)|^2-(a\lambda_{\max})^{M}{\cal Z}\mu|\geq (a\lambda_{\max})^{M}{\cal Z}\epsilon) \leq 2\exp{\Big(-\frac{N\epsilon^2}{2C^{2M}}\Big)},
\end{equation}
where $\mu$ is the sample mean of $p_{\text{sq}}$. With a success probability $1-\delta$, a sufficient number of samples for the estimation of $|\text{Haf}(R)|^2$ within an additive-error $\epsilon$ is given by
\begin{equation}
    N=\frac{2(a\lambda_{\max})^{2M}{\cal Z}^2C^{2M}}{\epsilon^2}\ln\frac{2}{\delta}.
\end{equation}

In other words, if we fix the sample size as $N=O(\log{\delta^{-1}}/\epsilon^2)$, one can estimate $|\text{Haf}(R)|^2$ with a success probability $1-\delta$ within an additive-error $2\epsilon (a\lambda_{\max}C{\cal Z}^{1/M})^M$. To obtain the bound $C$, we express the probability using $s$-PQDs as 
\begin{align}
     p_{\text{sq}}
    &=\int d^{2M}\bm{\alpha}\prod^{M}_{i=1}\frac{1}{\pi \sqrt{\det{(V_{\text{sq},i}-s/2)}}}e^{-\bm{\alpha}_i {(V_{\text{sq},i}-s/2)}^{-1}\bm{\alpha}_i^T}\prod_{j=1}^{M} \frac{8|\beta_j|^2+2(s^2-1)}{(s+1)^3} e^{-\frac{2|\beta_j|^2}{s+1}}
    \\ &=\int d^{2M}\bm{\alpha} \prod^{M}_{i=1}\frac{2}{\pi \sqrt{(e^{2r_i}-s)(e^{-2r_i}-s)}}e^{-\frac{2\alpha_{ix}^2}{e^{2r_i}-s}-\frac{2\alpha_{iy}^2}{e^{-2r_i}-s}}\prod_{j=1}^{M} \frac{8|\beta_j|^2+2(s^2-1)}{(s+1)^3} e^{-\frac{2|\beta_j|^2}{s+1}}
    \\ &=\int d^{2M}\bm{\alpha} \prod^{M}_{i=1}\frac{2}{\pi \sqrt{(e^{2r_i}-s)(e^{-2r_i}-s)}}e^{-\left(\frac{2}{e^{2r_i}-s}-\gamma \frac{2}{e^{2r_{\max}}-s} \right)\alpha_{ix}^2-\left(\frac{2}{e^{-2r_i}-s}-\gamma \frac{2}{e^{2r_{\max}}-s} \right)\alpha_{iy}^2} \nonumber\\
    &\times \prod_{j=1}^{M} \frac{8|\beta_j|^2+2(s^2-1)}{(s+1)^3} e^{-\left(\frac{2}{s+1}+\gamma \frac{2}{e^{2r_{\max}}-s}\right)|\beta_j|^2}
    \\&=\int d^{2M}\bm{\alpha} \prod^{M}_{i=1}\frac{2}{\pi} \sqrt{\frac{e^{2r_{\max}}-s-\gamma(e^{2r_i}-s)}{(e^{2r_i}-s)(e^{2r_{\max}}-s)}}\sqrt{\frac{e^{2r_{\max}}-s-\gamma(e^{-2r_i}-s)}{(e^{-2r_i}-s)(e^{2r_{\max}}-s)}}e^{-\left(\frac{2}{e^{2r_i}-s}-\gamma \frac{2}{e^{2r_{\max}}-s} \right)\alpha_{ix}^2-\left(\frac{2}{e^{-2r_i}-s}-\gamma \frac{2}{e^{2r_{\max}}-s} \right)\alpha_{iy}^2}\nonumber \\
    &\times \prod_{j=1}^{M} \sqrt{\frac{e^{2r_{\max}}-s}{e^{2r_{\max}}-s-\gamma(e^{2r_j}-s)}}\sqrt{\frac{e^{2r_{\max}}-s}{e^{2r_{\max}}-s-\gamma(e^{-2r_j}-s)}}\frac{8|\beta_j|^2+2(s^2-1)}{(s+1)^3} e^{-\left(\frac{2}{s+1}+\gamma \frac{2}{e^{2r_{\max}}-s}\right)|\beta_j|^2}\\
    &\coloneqq \int d^{2M}\bm{\alpha}\prod^M_{i=1}P_{\text{sq},i}(\alpha_i,r_i,\gamma,s)\prod^M_{j=1}f_{\text{sq},j}(\beta_j,r_j,\gamma,s),\label{eq:sq}
\end{align}
where $V_{\text{sq},i}$ is the covariance matrix of a squeezed vacuum state in mode $i$ and $\gamma \in [0, 1)$ is the (normalized) parameter shifting the Gaussian factor such that $\gamma \rightarrow 1$ ($\gamma = 0$) means maximum (no) shifting. To obtain a bound on $|f_{\text{sq},j}(\beta_j,r_j,\gamma,s)|$, let us set $s=s_{\max}=e^{-2r_{\max}}$ and $r_j=\text{tanh}^{-1}\lambda_j$. The extreme points of $f_{\text{sq},j}$ are $0,\pm \beta^*$ with
\begin{equation}
    \beta^*=\sqrt{\frac{\lambda_{\max} (\gamma  (\lambda_{\max}-1)-4 \lambda_{\max}-2)}{(\lambda_{\max}+1)^2 (\gamma  (\lambda_{\max}-1)-2 \lambda_{\max})}}.
\end{equation}
Note that for $s_{\max} < 1$, $f_{\text{sq},j}(0,\lambda_j,\gamma,s_{\max}) <0$ and $f_{\text{sq},j}(\beta^*,\lambda_j,\gamma,s_{\max})>0$. Since the extreme values are changing monotonically as $\gamma$, we choose $\gamma$ satisfying the condition $-f_{\text{sq},j}(0,\lambda_j,\gamma,s_{\max})=f_{\text{sq},j}(\beta^*,\lambda_j,\gamma,s_{\max})$, which is given by
\begin{equation}
    \gamma^*=\frac{2(1+\lambda_{\max})W(1/e)-2\lambda_{\max}}{1-\lambda_{\max}} ~~\text{  for  } 0 \leq \lambda_{\max} < \frac{W(1/e)}{1-W(1/e)},
\end{equation}
where $W(x)$ is the Lambert $W$ function, and $\frac{W(1/e)}{1-W(1/e)} \simeq 0.386$. Then an upper bound is obtained by substituting $\gamma^*$ into $f_{\text{sq},j}(\beta^*,\lambda_j,\gamma,s_{\max})$, such that
\begin{align}
    \min_{s,\gamma}\max_{\beta_j}& \left| 
    f_{\text{sq},j}(\beta_j,\lambda_j,\gamma,s) \right|
    \leq \frac{\lambda_{\max}^2\sqrt{1-\lambda_j^2}}{\sqrt{\lambda_{\max}^2(1-W(1/e))^2-\lambda_j^2 W(1/e)^2}} ~~\text{  for  } 0 \leq \lambda_{\max} < \frac{W(1/e)}{1-W(1/e)},
\end{align}

Although this bound is valid only for a certain range of $\lambda_{\max}$, we can also find the same bound out of the range by shifting the Gaussian factor in the reverse direction. Specifically,
\begin{align}
    p_{\text{sq}}&=\int d^{2M}\bm{\alpha} \prod^{M}_{i=1}\frac{2}{\pi \sqrt{(e^{2r_i}-s)(e^{-2r_i}-s)}}e^{-\frac{2\alpha_{ix}^2}{e^{2r_i}-s}-\frac{2\alpha_{iy}^2}{e^{-2r_i}-s}}\prod_{j=1}^{M} \frac{8|\beta_j|^2+2(s^2-1)}{(s+1)^3} e^{-\frac{2|\beta_j|^2}{s+1}} \\
    &=\int d^{2M}\bm{\alpha} \prod^{M}_{i=1}\frac{2}{\pi \sqrt{(e^{2r_i}-s)(e^{-2r_i}-s)}}e^{-\left(\frac{2}{e^{2r_i}-s}+\gamma'\frac{2}{s+1}\right)\alpha_{ix}^2}e^{-\left(\frac{2}{e^{-2r_i}-s}+\gamma'\frac{2}{s+1}\right)\alpha_{iy}^2}\prod_{j=1}^{M} \frac{8|\beta_j|^2+2(s^2-1)}{(s+1)^3} e^{-(1-\gamma')\frac{2|\beta_j|^2}{s+1}}\\
    &=\int d^{2M}\bm{\alpha} \prod^{M}_{i=1}\frac{2}{\pi}\sqrt{\frac{s+1+\gamma'(e^{2r_i}-s)}{(e^{2r_i}-s)(s+1)}}\sqrt{\frac{s+1+\gamma'(e^{-2r_i}-s)}{(e^{-2r_i}-s)(s+1)}}e^{-\left(\frac{2}{e^{2r_i}-s}+\gamma'\frac{2}{s+1}\right)\alpha_{ix}^2}e^{-\left(\frac{2}{e^{-2r_i}-s}+\gamma'\frac{2}{s+1}\right)\alpha_{iy}^2}\nonumber \\
    & \times\prod_{j=1}^{M}\sqrt{\frac{s+1}{s+1+\gamma'(e^{2r_j}-s)}}\sqrt{\frac{s+1}{s+1+\gamma'(e^{-2r_j}-s)}} \frac{8|\beta_j|^2+2(s^2-1)}{(s+1)^3} e^{-(1-\gamma')\frac{2|\beta_j|^2}{s+1}}\\
    &\coloneqq \int d^{2M}\bm{\alpha}\prod^M_{i=1}P'_{\text{sq},i}(\alpha_i,r_i,\gamma',s)\prod^M_{j=1}f'_{\text{sq},j}(\beta_j,r_j,\gamma',s),
\end{align}
where $\gamma' \in [0,1)$ is the parameter shifting the exponential term in the reverse direction such that $\gamma' \rightarrow 1$ $(\gamma'=0)$ means maximum (no) shifting. Similarly, the extreme points are $0,\pm \beta'^*$ with
\begin{equation}
    \beta'^*=\sqrt{\frac{(\gamma' -2) \lambda_{\max}-1}{(\gamma' -1) (\lambda_{\max}+1)^2}}.
\end{equation}
Then the $\gamma'$ satisfying the condition $-f'_{\text{sq},j}(0,\lambda_j,\gamma',s_{\max})=f_{\text{sq},j}(\beta'^*,\lambda_j,\gamma',s_{\max})$ is given by
\begin{equation}
    \gamma'^*=\frac{\lambda_{\max}-(1+\lambda_{\max})W(1/e)}{\lambda_{\max}}~~\text{  for  } \frac{W(1/e)}{1-W(1/e)} \leq \lambda_{\max} < 1.
\end{equation}
Finally we obtain an upper bound such that
\begin{align}
    \min_{s,\gamma'}\max_{\beta_j}& \left| 
    f'_j(\beta_j,\lambda_j,\gamma',s) \right|
    \leq \frac{\lambda_{\max}^2\sqrt{1-\lambda_j^2}}{\sqrt{\lambda_{\max}^2(1-W(1/e))^2-\lambda_j^2 W(1/e)^2}} ~~\text{  for  } \frac{W(1/e)}{1-W(1/e)} \leq \lambda_{\max} < 1,
\end{align}
which covers the remaining range of $\lambda_{\max}$.

Note that
\begin{equation}
    {\cal Z}=\prod^M_{i=1}\cosh{r_i}=\frac{1}{\prod^M_{i=1}\sqrt{1-\lambda_i^2}}.
\end{equation}
Consequently, by Eq.~(\ref{eq:hafestimation}), for a given complex $M \times M$ matrix $R$, and for the number of samples $N=O(\log{\delta^{-1}}/\epsilon^2)$, we can estimate $\left| \Haf(R)\right|^2$ with a success probability $1-\delta$ within an additive-error as
\begin{equation}\label{eq:addhafr}
    \epsilon \prod^M_{i=1}\frac{a\lambda_{\max}\lambda^{'2}_{\max}}{\sqrt{\lambda^{'2}_{\max}(W(1/e)-1)^2-\lambda^{'2}_i W(1/e)^2}}=\epsilon \prod^M_{i=1} \frac{\lambda_{\max}^2}{\sqrt{\lambda^{2}_{\max}(W(1/e)-1)^2-\lambda^{2}_i W(1/e)^2}},
\end{equation}
where $\lambda'_i=\frac{\lambda_i}{a\lambda_{\max}}$. 
We find upper and lower bound by setting $\lambda_i=\lambda_{\max}$ and $\lambda_i=0$, respectively, such as
\begin{equation}
   \epsilon (1.386\lambda_{\max})^M \simeq\left( \epsilon  \frac{\lambda_{\max}}{1-W(1/e)}\right)^M \leq \epsilon \prod^M_{i=1} \frac{\lambda_{\max}^2}{\sqrt{\lambda^{2}_{\max}(W(1/e)-1)^2-\lambda^{2}_i W(1/e)^2}}\leq \epsilon \left( \frac{\lambda_{\max}}{\sqrt{1-2W(1/e)}}\right)^M \simeq \epsilon (1.502\lambda_{\max})^M.
\end{equation}

Next, we consider a squeezed thermal input state $\{r_i,n\}_{i=1}^M$ to allow $s_{\max}\geq 1$. For simplicity, the average thermal photon number $n$ is fixed. Then the probability of all single-photon outcome is written by hafnian as
\begin{equation}
    p_{\text{st}}=\frac{1}{\sqrt{|V_Q^{\text{st}}|}}\Haf (A),
\end{equation}
where $V_Q^{\text{st}}=V_{\text{st}}+\mathbb{I}_{2M}/2$ with $V_{\text{st}}$ the covariance matrix of a squeezed thermal state and $A=\begin{pmatrix} R & B \\ B^T & R^*\end{pmatrix}$ with a symmetric matrix $R$ and an HPSD matrix $B$ decomposed as
Eqs.~(\ref{eq:acondition1}) and (\ref{eq:acondition2}).
Let us define $a_{\pm}(r_i,n)=(2n+1)e^{\pm 2r_i}$, $a_{\min}=(2n+1)e^{-2r_{\max}}$, and $a_{\max}=(2n+1)e^{2r_{\max}}$.
Then the probability $p_{\text{st}}$ is written by $s$-PQDs as
\begin{align}
   &p_{\text{st}}=\int d^{2M}\bm{\alpha} \prod^{M}_{i=1}\frac{2}{\pi \sqrt{(a_+(r_i,n)-s)(a_-(r_i,n)-s)}}e^{-\frac{2\alpha_{ix}^2}{a_+(r_i,n)-s}-\frac{2\alpha_{iy}^2}{a_-(r_i,n)-s}} \prod_{j=1}^{M} \frac{8|\beta_j|^2+2(s^2-1)}{(s+1)^3} e^{-\frac{2|\beta_j|^2}{s+1}} \\
   &=\int d^{2M}\bm{\alpha} \prod^{M}_{i=1}\frac{2}{\pi \sqrt{(a_+(r_i,n)-s)(a_-(r_i,n)-s)}}e^{-\left(\frac{2}{a_+(r_i,n)-s}+\gamma'\frac{2}{s+1}\right)\alpha_{ix}^2}e^{-\left(\frac{2}{a_-(r_i,n)-s}+\gamma'\frac{2}{s+1}\right)\alpha_{iy}^2} \nonumber \\
   &\times \prod_{j=1}^{M} \frac{8|\beta_j|^2+2(s^2-1)}{(s+1)^3} e^{-(1-\gamma')\frac{2|\beta_j|^2}{s+1}}\\
    &=\int d^{2M}\bm{\alpha} \prod^{M}_{i=1}\frac{2}{\pi}\sqrt{\frac{s+1+\gamma'(a_+(r_i,n)-s)}{(a_+(r_i,n)-s)(s+1)}}\sqrt{\frac{s+1+\gamma'(a_-(r_i,n)-s)}{(a_-(r_i,n)-s)(s+1)}}e^{-\left(\frac{2}{a_+(r_i,n)-s}+\gamma'\frac{2}{s+1}\right)\alpha_{ix}^2}e^{-\left(\frac{2}{a_-(r_i,n)-s}+\gamma'\frac{2}{s+1}\right)\alpha_{iy}^2}\nonumber \\
    & \times\prod_{j=1}^{M}\sqrt{\frac{s+1}{s+1+\gamma'(a_+(r_j,n)-s)}}\sqrt{\frac{s+1}{s+1+\gamma'(a_-(r_j,n)-s)}} \frac{8|\beta_j|^2+2(s^2-1)}{(s+1)^3} e^{-(1-\gamma')\frac{2|\beta_j|^2}{s+1}}\\
    &\coloneqq \int d^{2M}\bm{\alpha}\prod^M_{i=1}P'_{\text{st},i}(\alpha_i,r_i,n,\gamma',s)\prod^M_{j=1}f'_{\text{st},j}(\beta_j,r_j,n,\gamma',s),
\end{align}
where we use the reverse shifting of the Gaussian factor. To obtain an upper bound on $|f'_{\text{st},j}(\beta_j,r_j,n,\gamma',s)|$, let us set $s=s_{\max}=a_{\min}$. The extreme points are $0,\pm \beta'^*$ with
\begin{equation}
    \beta'^*=\frac{1}{2} \sqrt{\frac{e^{-4 r_{\max}} \left(2 n+e^{2 r_{\max}}+1\right) \left\{(\gamma' -3) e^{2 r_{\max}}-(\gamma' -1) (2 n+1)\right\}}{\gamma' -1}}.
\end{equation}
After choosing $\gamma'^*=e^{-\tanh{r_{\max}}}\frac{n}{n+1}$,
\begin{align}
    &\min_{s,\gamma'}\max_{\beta_j}|f'_{\text{st},j}(\beta_j,r_j,n,\gamma',s)|\leq |f'_{\text{st},j}(\beta'^*,r_j,n,\gamma'^*,a_{\min})|\\
    &=\exp\left[ \frac{1}{2}\left\{ -3+\frac{ne^{-\tanh{r_{\max}}}}{1+n}-(2n+1)e^{-2r_{\max}}\left(-1+\frac{ne^{-\tanh{r_{\max}}}}{1+n}\right)+2r_j+8r_{\max}+3\tanh{r_{\max}}\right\}\right] \nonumber \\
    &\times \frac{4(1+n)^2}{(1+e^{2r_{\max}}+2n)\{(1+n)e^{\tanh{r_{\max}}}-n\}}\sqrt{\frac{1}{1+e^{2r_{\max}}+3n+ne^{2r_{\max}}+2n^2+n(2n+1)(e^{2(r_j+r_{\max})}-1)e^{-\tanh{r_{\max}}}}} \nonumber\\
    &\times \sqrt{\frac{1}{(2n+1)e^{2r_j}\left\{-n+(n+1)e^{\tanh{r_{\max}}} \right\}+e^{2r_{\max}}\left\{ n+2n^2+(n+1)e^{2r_j+\tanh{r_{\max}}}\right\}}}.
\end{align}
Meanwhile, 
\begin{equation}
    \sqrt{|V^{\text{st}}_Q|}=\prod_{i=1}^M \sqrt{\frac{1}{2}+n(n+1)+(n+\frac{1}{2})\cosh{2r_i}}.
\end{equation}
Finally, for a matrix $A$ satisfying Eqs.~(\ref{eq:acondition1}) and (\ref{eq:acondition2}), we can estimate $\Haf(A)$ with a success probability $1-\delta$ using number of samples $N=O(\log{\delta^{-1}}/\epsilon^2)$ within the additive-error given by
\begin{equation}\label{eq:addhafa}
    \epsilon \prod_{i=1}^M \sqrt{\frac{1}{2}+n(n+1)+(n+\frac{1}{2})\cosh{2r_i}}~f'_{\text{st},i}(\beta'^*,r_i,n,\gamma'^*,a_{\min})\coloneqq \epsilon \prod_{i=1}^M H_i^A(n,r_i).
\end{equation}

\subsection{Permanent of Hermitian positive semidefinite matrices (Proof of Theorem~\ref{th:per})}\label{app:addper}
When a thermal state input with average photon numbers $\{n_i\}_{i=1}^M$ goes through a linear optical circuit instead of a squeezed vacuum state, the probability of all single-photon outcomes corresponds to the permanent of HPSD matrices~\cite{chakhmakhchyan2017quantum}. In Ref.~\cite{chakhmakhchyan2017quantum}, an algorithm for estimating the permanent of an HPSD matrix is proposed. Here, we improve the precision of the estimation using $s$-PQDs and shifting Gaussian factors. The probability of all single-photon outcomes is connected to the permanent of an HPSD matrix such that~\cite{chakhmakhchyan2017quantum}
\begin{equation}\label{eq:pergbs}
    p_{\text{th}}=\frac{1}{\cal Z'}\Per(B'),
\end{equation}
where ${\cal Z'}=\prod_{i=1}^M (1+n_i)$, $B'=UDU^{\dagger}$, and $D=\text{diag}\{ \frac{n_1}{n_1+1},...,\frac{n_M}{n_M+1}\}$. Thus if we have an $M \times M$ HPSD matrix $B$, firstly we rescale the matrix with the largest eigenvalue $\lambda_{\max}$ as $B'=B/(a\lambda_{\max})$ with $a>1$, and find its unitary diagonalization such as $B'=UDU^{\dagger}$ so that we can find a GBS circuit $U$ with thermal input state $\{n_i\}_{i=1}^M$, whose probability matches $\Per(B')$. Then,
\begin{equation}
    \Per(B)=(a\lambda_{\max})^M \Per (B')=(a\lambda_{\max})^M {\cal Z}'p_{\text{th}}.
\end{equation}
If an estimator of $\text{Per}(B)$ lies in the interval $[-C^M,C^M]$, by Hoeffding's inequality~\cite{hoeffding1994probability},
\begin{equation}\label{eq:esper}
    \text{Pr}(|\Per (B)-(a\lambda_{\max})^M {\cal Z}'\mu|\geq (a\lambda_{\max})^M {\cal Z}'\epsilon) \leq 2 \exp \left(-\frac{N\epsilon^2}{2C^{2M}}\right),
\end{equation}
where $\mu$ is the sample mean of $p_\text{th}$. Thus for the number of samples $N=O(\log{\delta^{-1}}/\epsilon^2)$, we can estimate $\Per (B)$ with a success probability $1-\delta$ within an additive-error $\epsilon(a\lambda_{\max}C {\cal Z}'^{1/M})^M$.

Now we use the same method as in the hafnian case by introducing the shifting parameter $\gamma \in [0,1)$, such as
\begin{align}\label{eq:perfor}
    p_\text{th}&=\int d^{2M}\bm{\alpha} \prod^M_{i=1}\frac{2}{\pi(2n_i+1-s)} e^{-\left(\frac{2}{2n_i+1-s}-\gamma\frac{2}{2n_\text{max}+1-s}\right)|\alpha_i|^2}
     \prod_{j=1}^M \frac{8|\beta_j|^2+2(s^2-1)}{(s+1)^3} e^{-\left(\frac{2}{s+1}+\gamma\frac{2}{2n_\text{max}+1-s}\right)|\beta_j|^2} \\
    &=\int d^{2M}\bm{\alpha} \prod^M_{i=1}\frac{2}{\pi}\frac{2n_{\max}+1-s-\gamma(2n_i+1-s)}{(2n_i+1-s)(2n_{\max}+1-s)} 
     e^{-\left(\frac{2}{2n_i+1-s}-\gamma\frac{2}{2n_\text{max}+1-s}\right)|\alpha_i|^2} \nonumber \\
    &\times \prod_{j=1}^M \frac{2n_{\max}+1-s}{2n_{\max}+1-s-\gamma(2n_j+1-s)} 
    \frac{8|\beta_j|^2+2(s^2-1)}{(s+1)^3} e^{-\left(\frac{2}{s+1}+\gamma\frac{2}{2n_\text{max}+1-s}\right)|\beta_j|^2}\\
    &\coloneqq \int d^{2M}\bm{\alpha}\prod^M_{i=1}P_{\text{th},i}(\alpha_i,n_i,\gamma,s)\prod^M_{j=1}f_{\text{th},j}(\beta_j,n_j,\gamma,s).\label{eq:th}
\end{align}
One can compute the upper bound of $|f_{\text{th},j}(\beta_j,n_j,\gamma,s)|$ in three different regions of $\lambda_{\min}=\frac{n_{\min}}{1+n_{\min}} \in [0,1)$, as $\lambda_{\min}=0$, $0<\lambda_{\min}<1/2$, and $1/2 \leq \lambda_{\min}<1$. We set $s=s_{\max}=2n_{\min}+1$, $n_j=\frac{\lambda_j}{1-\lambda_j}$ and note that the extreme points of $f_{\text{th},j}(\beta_j,\lambda_j,\gamma,s)$ are at 0, $\pm \beta^*$ with
\begin{align}
 \beta^*=\sqrt{\frac{\lambda_{\min} (\gamma -2 \lambda_{\min}+1)-\lambda_{\max} ((\gamma -2) \lambda_{\min}+1)}{(\lambda_{\min}-1)^2 (\gamma  (\lambda_{\max}-1)-\lambda_{\max}+\lambda_{\min})}}.
\end{align}

For $\lambda_{\min}=0$, we can find $\gamma$ satisfying $\frac{{\partial f_j(\beta^*,\lambda_{\max},\gamma,s_{\max})}}{{\partial \gamma}}=0$ as
 \begin{equation}
     \gamma^*=\frac{1-2\lambda_{\max}}{2(1-\lambda_{\max})}~~\text{ for~~} 0 \leq \lambda_{\max} < \frac{1}{2}.
\end{equation}
Then an upper bound on the $|f_{\text{th},j}(\beta_j,\lambda_j,\gamma,s)|$ is obtained as
\begin{align}
    &\min_{s,\gamma}\max_{\beta_j}|f_{\text{th},j}(\beta_j,\lambda_j,\gamma,s)| \leq \frac{4\lambda_{\max}^2(1-\lambda_j)}{e(2\lambda_{\max}-\lambda_j)}~~\text{ for~~} 0 \leq \lambda_j <\frac{1}{2}.
\end{align}
 
For $0<\lambda_{\min}<1/2$, we similarly obtain $\gamma$ as
\begin{align}
    &\gamma^*=\frac{\lambda_{\min}+\lambda_{\max}(4\lambda_{\min}-2)+D-\lambda_{\min}\left(3\lambda_{\min}+D\right) }{2\lambda_{\min}(\lambda_{\max}-1)} ~~\text{for~~~}  \lambda_j \leq \frac{\lambda_{\min}^2+\lambda_{\min}-1}{3\lambda_{\min}-2},
\end{align}
where $D=\sqrt{4\lambda_{\max}^2-8\lambda_{\max} \lambda_{\min}+5\lambda_{\min}^2}$. Then an upper bound on the $|f_{\text{th},j}(\beta_j,\lambda_j,\gamma,s)|$ is given by
\begin{align}
\min_{s,\gamma}\max_{\beta_j}|f_{\text{th},j}(\beta_j,\lambda_j,\gamma,s)|&\leq \frac{4 (1-\lambda_j) \lambda_{\min}^2 e^{\frac{\lambda_{\min}-D}{2 \lambda_{\max}-2 \lambda_{\min}}} (\lambda_{\max}-\lambda_{\min})^2}{\left(D-2 \lambda_{\max}+\lambda_{\min}\right) \left(\lambda_{\min} \left(D-4 \lambda_{\max}+3 \lambda_{\min}\right)-\lambda_j \left(D-2 \lambda_{\max}+\lambda_{\min}\right)\right)}\\
    &\text{for~~~}  \lambda_j \leq \frac{\lambda_{\min}^2+\lambda_{\min}-1}{3\lambda_{\min}-2}.
\end{align}

We will consider the case of $1/2 \leq \lambda_{\min} <1$ later, in which we achieve a multiplicative-error estimation scheme. To cover the full range of $\lambda_j$, we consider the reverse shifting such that
\begin{align}
    p_\text{th}&=\int d^{2M}\bm{\alpha} \prod^M_{i=1}\frac{2}{\pi(2n_i+1-s)} e^{-\left(\frac{2}{2n_i+1-s}+\gamma'\frac{2}{s+1}\right)|\alpha_i|^2}
     \prod_{j=1}^M \frac{8|\beta_j|^2+2(s^2-1)}{(s+1)^3} e^{-(1-\gamma')\frac{2}{s+1}|\beta_j|^2} \\
    &=\int d^{2M}\bm{\alpha} \prod^M_{i=1}\frac{2}{\pi}\frac{s+1+\gamma'(2n_i+1-s)}{(2n_i+1-s)(s+1)} 
     e^{-\left(\frac{2}{2n_i+1-s}+\gamma'\frac{2}{s+1}\right)|\alpha_i|^2} \nonumber \\
    &\times \prod_{j=1}^M \frac{s+1}{s+1+\gamma'(2n_j+1-s)}
    \frac{8|\beta_j|^2+2(s^2-1)}{(s+1)^3} e^{-(1-\gamma')\frac{2}{s+1}|\beta_j|^2}\\
    &\coloneqq \int d^{2M}\bm{\alpha}\prod^M_{i=1}P'_{\text{th},i}(\alpha_i,n_i,\gamma',s)\prod^M_{j=1}f'_{\text{th},j}(\beta_j,n_j,\gamma',s).
\end{align}
where $\gamma' \in (0,1]$. When we put $s=2n_{\min}+1$, the extreme points of $f'_{\text{th},j}(\beta_j,\lambda_j,\gamma',s)$ are at 0, $\pm \beta^{'*}$ with
\begin{equation}
    \beta^{'*}=\sqrt{\frac{-\gamma'  \lambda_{\min}+2 \lambda_{\min}-1}{\gamma'  \lambda_{\min}^2-2 \gamma'  \lambda_{\min}+\gamma' -\lambda_{\min}^2+2 \lambda_{\min}-1}}.
\end{equation}
For $\lambda_{\min}=0$, the condition $\frac{{\partial f_j(\beta^{'*},\lambda_{\max},\gamma',s_{\max})}}{{\partial \gamma'}}=0$ yields
\begin{equation}
    \gamma'^*=\frac{2\lambda_{\max}-1}{2\lambda_{\max}},~~\text{ for~~}  \frac{1}{2} \leq \lambda_{\max} <1.
\end{equation}
Then an upper bound of $|f'_{\text{th},j}(\beta_j,\lambda_j,\gamma',s)|$ is given by
\begin{align}
&\min_{s,\gamma'}\max_{\beta_j}|f'_{\text{th},j}(\beta_j,\lambda_j,\gamma',s)|\leq \frac{4\lambda_{\max}^2(1-\lambda_j)}{e(2\lambda_{\max}-\lambda_j)},~~\text{ for~~}  \frac{1}{2} \leq \lambda_j <1,
\end{align}
which is consistent with the bound for $0<\lambda_j<\frac{1}{2}$.

Similarly, when $0<\lambda_{\min}<1/2$,
\begin{equation}
\gamma'^*=\frac{\lambda_{\min}+\lambda_{\max}(4\lambda_{\min}-2)+D-\lambda_{\min}\left(3\lambda_{\min}+D \right)}{2\lambda_{\min}(\lambda_{\max}-\lambda_{\min})},~~\text{for~~}  \frac{\lambda_{\min}^2+\lambda_{\min}-1}{3\lambda_{\min}-2}\leq \lambda_j <1
\end{equation}
and the corresponding bound is
\begin{align}
\min_{s,\gamma'}\max_{\beta_j}|f'_{\text{th},j}(\beta_j,\lambda_j,\gamma',s)|&\leq \frac{4 (1-\lambda_j) \lambda_{\min}^2 e^{\frac{\lambda_{\min}-D}{2 \lambda_{\max}-2 \lambda_{\min}}} (\lambda_{\max}-\lambda_{\min})^2}{\left(D-2 \lambda_{\max}+\lambda_{\min}\right) \left(\lambda_{\min} \left(D-4 \lambda_{\max}+3 \lambda_{\min}\right)-\lambda_j \left(D-2 \lambda_{\max}+\lambda_{\min}\right)\right)},\\
    &\text{for~~~}  \frac{\lambda_{\min}^2+\lambda_{\min}-1}{3\lambda_{\min}-2}\leq \lambda_j <1.
\end{align}

Meanwhile,
\begin{equation}
    {\cal Z'}=\prod^M_{i=1}(1+n_i)=\frac{1}{\prod^M_{i=1}(1-\lambda_i)}.
\end{equation} 
Consequently, for the number of samples $N=O(\log{\delta^{-1}}/\epsilon^2)$ and success probability $1-\delta$, we can estimate $\Per(B)$ when the minimum eigenvalue $\lambda_{\min}=0$ within an additive-error as
\begin{align}\label{eq:per1}
    \epsilon \prod^M_{i=1}\frac{4a\lambda_{\max}\lambda'^{2}_{\max}}{e(2\lambda'_{\max}-\lambda'_i)}=\epsilon \prod_{i=1}^M \frac{4\lambda^{2}_{\max}}{e(2\lambda_{\max}-\lambda_i)},
\end{align}
where $\lambda_i'=\frac{\lambda_i}{a\lambda_{\max}}$. Let us first compare our result with the existing result~\cite{chakhmakhchyan2017quantum}, where $\gamma=0$ and $s=1$. In the latter case, the upper bound on the estimator is such as $\max |f_j|\leq e^{-1}$, so thus corresponding additive-error is given by
\begin{equation}\label{eq:pra}
    \epsilon \prod^M_{i=1}\frac{1}{e(1-\lambda_i)},
\end{equation}
where we assume $\lambda_i \in [0,1)$.
Note that Eq.~(\ref{eq:per1}) is smaller than Eq.~(\ref{eq:pra}) when $\lambda_{\max} \in (0,1/2)$, so we have a better precision.

Next, we compare our algorithm's precision with Gurvits' randomized algorithm for the permanent of a general complex matrix $A$ giving the additive-error as $\epsilon ||A||^M=\epsilon\lambda_{\max}^M$ with samples $N=O(M^2/\epsilon^2)$. Thus from Eq.~(\ref{eq:per1}), the necessary and sufficient condition for beating Gurvits' precision is written as
\begin{equation}
    \prod_{i=1}^M \frac{4\lambda^{2}_{\max}}{e(2\lambda_{\max}-\lambda_i)} < \lambda_{\max}^M.
\end{equation}
To get lower and upper bounds, we put $\lambda_i=0$ and $\lambda_i=\lambda_{\max}$ for all $i$, respectively, such as
\begin{equation}
   \epsilon(0.736\lambda_{\max})^M \simeq \epsilon\prod_{i=1}^M \frac{2\lambda_{\max}}{e} \leq  \epsilon\prod_{i=1}^M \frac{4\lambda^{2}_{\max}}{e(2\lambda_{\max}-\lambda_i)} \leq \epsilon\prod_{i=1}^M \frac{4\lambda_{\max}}{e} \simeq \epsilon(1.472\lambda_{\max})^M.
\end{equation}
Also for $0<\lambda_{\min}<1/2$, the additive-error for $\Per(B)$ is given by
\begin{align}\label{eq:per2}
    &\epsilon \prod^M_{i=1}\frac{4 a\lambda_{\max} \lambda'^2_{\min} e^{\frac{\lambda'_{\min}-D'}{2 \lambda'_{\max}-2 \lambda'_{\min}}} (\lambda'_{\max}-\lambda'_{\min})^2}{\left(D'-2 \lambda'_{\max}+\lambda'_{\min}\right) \left(\lambda'_{\min} \left(D'-4 \lambda'_{\max}+3 \lambda'_{\min}\right)-\lambda'_i \left(D'-2 \lambda'_{\max}+\lambda'_{\min}\right)\right)}\\
    &=\epsilon \prod_{i=1}^M \frac{4 \lambda_{\min}^2 e^{\frac{\lambda_{\min}-D}{2 \lambda_{\max}-2 \lambda_{\min}}} (\lambda_{\max}-\lambda_{\min})^2}{\left(D-2 \lambda_{\max}+\lambda_{\min}\right) \left(\lambda_{\min} \left(D-4 \lambda_{\max}+3 \lambda_{\min}\right)-\lambda_i \left(D-2 \lambda_{\max}+\lambda_{\min}\right)\right)} \coloneqq \epsilon \prod^M_{i=1}H^B_i(\lambda_i),
\end{align}
where $D'=\sqrt{4\lambda'^2_{\max}-8\lambda'_{\max} \lambda'_{\min}+5\lambda'^2_{\min}}$.
Similarly, the necessary and sufficient condition for beating Gurvits' is given by
\begin{equation}
    \prod^M_{i=1}H^B_i(\lambda_i)<\lambda_{\max}^M.
\end{equation}

\subsection{Torontonian}\label{app:addtor}

The Torontonian of a $2M \times 2M$ complex matrix $A'$ is defined as~\cite{quesada2018gaussian}
\begin{equation}
    \text{Tor}(A')=\sum_{Z \in P([M])}(-1)^{|Z|}\frac{1}{\sqrt{\det(\mathbb{I}-A'_{(Z)})}},
\end{equation}
where $P([M])$ is the power set of $[M]:=\{1,2,...,M \}$ and the matrix $A'$ has block structure such that
\begin{equation}\label{eq:aprime}
    A'=\begin{pmatrix} B^T & R^* \\ R & B \end{pmatrix},
\end{equation}
where $B$ is HPSD and $R$ is symmetric. Let us first consider a special case $B=0$, which corresponds to a GBS with pure squeezed input state $\{r_i\}_{i=1}^M$.
The probability of all threshold detectors ``click" in a GBS circuit is related to the Torontonian as
\begin{equation}
    p_{\text{on}|\text{sq}}=\frac{1}{\sqrt{|V_Q|}}\text{Tor}(\mathbb{I}_{2M}-V^{-1}_Q)=\frac{1}{\cal Z}\text{Tor}\begin{pmatrix} 0 & R^* \\ R & 0\end{pmatrix},
\end{equation}
where $V_Q=V+\mathbb{I}/2$ with the covariance matrix $V$, $\Pi_0=\ket{0}\bra{0}$, ${\cal Z}=\prod_i^{M}\cosh{r_i}$, $R=UDU^T$, and $D=\oplus^M_i \tanh{r_i}$. Meanwhile, the probability $p_{\text{on}|\text{sq}}$ can be written in terms of $s$-PQDs as
\begin{align}
     p_{\text{on}|\text{sq}}&=
     \int d^{2M}\bm{\alpha}\prod^M_{i=1}P_{\text{sq},i}(\alpha_i,r_i,\gamma,s)\prod^M_{j=1}f_{\text{on}|\text{sq},j}(\beta_j,r_j,\gamma,s),
\end{align}
where $P_{\text{sq},i}(\alpha_i,r_i,\gamma,s)$ is given in Eq.~(\ref{eq:sq}) and $f_{\text{on}|\text{sq},j}(\beta_j,r_j,\gamma,s)$ is defined as 
\begin{align}
    f_{\text{on}|\text{sq},j}(\beta_j,r_j,\gamma,s)=\sqrt{\frac{e^{2r_{\max}}-s}{e^{2r_{\max}}-s-\gamma(e^{2r_j}-s)}}\sqrt{\frac{e^{2r_{\max}}-s}{e^{2r_{\max}}-s-\gamma(e^{-2r_j}-s)}}\left(1-\frac{2}{s+1}e^{-\frac{2}{s+1}|\beta_j|^2}\right) e^{-\gamma \frac{2}{e^{2r_{\max}}-s}|\beta_j|^2}.
\end{align}
We set $s=s_{\max}=e^{-2r_{\max}}$ and $r_j=\tanh^{-1}\lambda_j$. The extreme points are at $0, \pm \beta^*$ with
\begin{align}
    \beta^*=\sqrt{\frac{1}{\lambda_{\max}+1}\log \left(\frac{(\lambda_{\max}+1) (\gamma  (\lambda_{\max}-1)-2 \lambda_{\max})}{\gamma  (\lambda_{\max}-1)}\right)}.
\end{align}
After we choose $\gamma^*=\frac{1}{2}(1-\lambda_{\max})$, the corresponding upper bound is given by
\begin{align}
     &\min_{s,\gamma}\max_{\beta_j} \left| 
    f_{\text{on},j}(\beta_j,\lambda_j,\gamma,s) \right| \\
    &\leq \frac{16\sqrt{1-\lambda_i^2}\lambda_{\max}^2}{(1+\lambda_{\max}^3)\sqrt{\{\lambda_i+(\lambda_i+3)\lambda_{\max}-\lambda_{\max}^2\}\{-\lambda_i-(\lambda_i-3)\lambda_{\max}-\lambda_{\max}^2 \}}}\left[\frac{(1+\lambda_{\max})^3}{(1-\lambda_{\max})^2} \right]^{-\frac{(1-\lambda_{\max})^2}{4\lambda_{\max}}},
\end{align}
Therefore, with success probability $1-\delta$, we can estimate the Torontonian of a matrix $\begin{pmatrix}0 & R^* \\ R & 0 \end{pmatrix}$ using the number of samples $O(\log{\delta^{-1}}/\epsilon^2)$ within an additive-error 
\begin{align}\label{eq:addtorr}
    \epsilon \prod_{i=1}^M \frac{16\lambda_{\max}^2}{(1+\lambda_{\max}^3)\sqrt{\{\lambda_i+(\lambda_i+3)\lambda_{\max}-\lambda_{\max}^2\}\{-\lambda_i-(\lambda_i-3)\lambda_{\max}-\lambda_{\max}^2 \}}}\left[\frac{(1+\lambda_{\max})^3}{(1-\lambda_{\max})^2} \right]^{-\frac{(1-\lambda_{\max})^2}{4\lambda_{\max}}}\coloneqq \epsilon \prod_{i=1}^M T_i(\lambda_i).
\end{align}

Next, we consider a special case where $R=0$ corresponds to a thermal input state $\{n_i\}_{i=1}^M$. In that case, 
\begin{equation}\label{eq:tor_th}
    p_{\text{on}|\text{th}}=\frac{1}{{\cal Z}'}\text{Tor}\begin{pmatrix}B^T & 0 \\ 0 & B \end{pmatrix},
\end{equation}
where ${\cal Z'}=\prod_{i=1}^M (1+n_i)$, $B=UDU^{\dagger}$, and $D=\text{diag}\{ \frac{n_1}{n_1+1},...,\frac{n_M}{n_M+1}\}$. Meanwhile, the probability $p_{\text{on}|\text{th}}$ can also be written as 
\begin{align}
   p_{\text{on}|\text{th}}=\int d^{2M}\bm{\alpha}\prod^M_{i=1}P_{\text{th},i}(\alpha_i,n_i,\gamma,s)\prod^M_{j=1}f_{\text{on}|\text{th},j}(\beta_j,n_j,\gamma,s),
\end{align}
where $P_{\text{th},i}(\alpha_i,n_i,\gamma,s)$ is given in Eq.~(\ref{eq:th}) and $f_{\text{on}|\text{th},j}(\beta_j,n_j,\gamma,s)$ is defined as
\begin{equation}\label{eq:fonth}
 f_{\text{on}|\text{th},j}(\beta_j,n_j,\gamma,s)=\frac{2n_{\max}+1-s}{2n_{\max}+1-s-\gamma(2n_j+1-s)} 
    \left(1-\frac{2}{s+1}e^{-\frac{2}{s+1}|\beta_j|^2}\right) e^{-\gamma\frac{2}{2n_\text{max}+1-s}|\beta_j|^2}.
\end{equation}
We set $s=s_{\max}=2n_{\min}+1$ and $n_j=\frac{\lambda_j}{1-\lambda_j}$. The extreme points are at $0,\pm \beta^*$ with
\begin{equation}
    \beta^*=\sqrt{\frac{1}{\lambda_{\min}-1}\log \left(\frac{\gamma -\gamma  \lambda_{\max}}{(\lambda_{\min}-1) (\gamma  \lambda_{\max}-\gamma -\lambda_{\max}+\lambda_{\min})}\right)}.
\end{equation}
Choosing $\gamma^*=\frac{1}{2}(1-\lambda_{\max})$, an upper bound on $|f_{\text{on}|\text{th},j}(\beta_j,\lambda_j,\gamma,s)|$ is given by
\begin{align}
     \min_{s,\gamma}\max_{\beta_j} \left| 
    f_{\text{on}|\text{th},j}(\beta_j,\lambda_j,\gamma,s) \right| &\leq \frac{4(1-\lambda_j)(\lambda_{\max}-\lambda_{\min})^2\left(\frac{(1-\lambda_{\max})^2}{(1-\lambda_{\min})(1+\lambda_{\max}^2-2\lambda_{\min})} \right)^{\frac{1+\lambda_{\max}^2-2\lambda_{\min}}{2\lambda_{\max}-2\lambda_{\min}}}(\lambda_{\min}-1)}{(1-\lambda_{\max})^2\{ \lambda_j(1+\lambda_{\max}^2-2\lambda_{\min})+\lambda_{\min}-\lambda_{\max}(2+(\lambda_{\max}-2)\lambda_{\min}))\}}.
\end{align}
Therefore, we can estimate Torontonian of a matrix $\begin{pmatrix}B^T & 0 \\ 0 & B \end{pmatrix}$ with success probability $1-\delta$ using number of samples $O(\log{\delta^{-1}}/\epsilon^2)$ within an additive-error 
\begin{align}\label{eq:addtorb}
    \epsilon \prod_{i=1}^M \frac{4(\lambda_{\max}-\lambda_{\min})^2\left[\frac{(1-\lambda_{\max})^2}{(1-\lambda_{\min})(1+\lambda_{\max}^2-2\lambda_{\min})} \right]^{\frac{1+\lambda_{\max}^2-2\lambda_{\min}}{2\lambda_{\max}-2\lambda_{\min}}}(\lambda_{\min}-1)}{(1-\lambda_{\max})^2\{ \lambda_i(1+\lambda_{\max}^2-2\lambda_{\min})+\lambda_{\min}-\lambda_{\max}(2+(\lambda_{\max}-2)\lambda_{\min}))\}} \coloneqq \epsilon \prod_{i=1}^M  T_i^B(\lambda_i).
\end{align}

Finally, we consider a matrix $A'$ defined by Eq.~(\ref{eq:aprime}), satisfying Eqs.~(\ref{eq:acondition1}) and (\ref{eq:acondition2}). Then the Torontonian of matrix $A'$ is related with a squeezed thermal input state $\{r_i,n\}_{i=1}^M$ such that
\begin{equation}
   p_{\text{on}|\text{st}} =\frac{\text{Tor}(A')}{\sqrt{|V_Q|}}.
\end{equation}
Meanwhile, $p_{\text{on}|\text{st}}$ is also can be written as
\begin{align}
   &p_{\text{on}|\text{st}}=\int d^{2M}\bm{\alpha} \prod^{M}_{i=1}\frac{2}{\pi \sqrt{(a_+(r_i,n)-s)(a_-(r_i,n)-s)}}e^{-\frac{2\alpha_{ix}^2}{a_+(r_i,n)-s}-\frac{2\alpha_{iy}^2}{a_-(r_i,n)-s}} \prod_{j=1}^{M} \left( 1-\frac{2}{s+1}e^{-\frac{2}{s+1}|\beta_j|^2}\right) \\
   &=\int d^{2M}\bm{\alpha} \prod^{M}_{i=1}\frac{2}{\pi \sqrt{(a_+(r_i,n)-s)(a_-(r_i,n)-s)}}e^{-\left(\frac{2}{a_+(r_i,n)-s}-\gamma \frac{2}{a_{\max}-s} \right)\alpha_{ix}^2-\left(\frac{2}{a_-(r_i,n)-s}-\gamma \frac{2}{a_{\max}-s} \right)\alpha_{iy}^2} \nonumber\\
    &\times \prod_{j=1}^{M} \left( 1-\frac{2}{s+1}e^{-\frac{2}{s+1}|\beta_j|^2}\right) e^{-\gamma \frac{2}{a_{\max}-s}|\beta_j|^2} \\
    &=\int d^{2M}\bm{\alpha} \prod^{M}_{i=1}\frac{2}{\pi}\sqrt{\frac{a_{\max}-s-\gamma(a_+(r_i,n)-s)}{(a_+(r_i,n)-s)(a_{\max}-s)}}\sqrt{\frac{a_{\max}-s-\gamma(a_-(r_i,n)-s)}{(a_-(r_i,n)-s)(a_{\max}-s)}}e^{-\left(\frac{2}{a_+(r_i,n)-s}-\gamma \frac{2}{a_{\max}-s} \right)\alpha_{ix}^2-\left(\frac{2}{a_-(r_i,n)-s}-\gamma \frac{2}{a_{\max}-s} \right)\alpha_{iy}^2}\nonumber \\
    &\times \prod_{j=1}^{M}\sqrt{\frac{a_{\max}-s}{a_{\max}-s-\gamma(a_+(r_j,n)-s)}}\sqrt{\frac{a_{\max}-s}{a_{\max}-s-\gamma(a_-(r_j,n)-s)}} \left( 1-\frac{2}{s+1}e^{-\frac{2}{s+1}|\beta_j|^2}\right) e^{-\gamma \frac{2}{a_{\max}-s}|\beta_j|^2}\\
    &\coloneqq \int d^{2M}\bm{\alpha}\prod^M_{i=1}P_{\text{st},i}(\alpha_i,r_i,n,\gamma,s)\prod^M_{j=1}f_{\text{on}|\text{st},j}(\beta_j,r_j,n,\gamma,s).
\end{align}
where $a_{\pm}(r_i,n)=(2n+1)e^{\pm 2r_i}$, $a_{\max}=(2n+1)e^{2r_{\max}}$, and $a_{\min}=(2n+1)e^{-2r_{\max}}$. We set $s=s_{\max}=a_{\min}$ and note the extreme points are $0, \pm \beta^*$ with
\begin{equation}
    \beta^*=\sqrt{\frac{1}{2} \left((2 n+1) e^{-2 r_{\max}}+1\right) \log \left(\frac{2 e^{2 r_{\max}} \left((\gamma -1) (2 n+1)+(2 n+1) e^{4 r_{\max}}+\gamma  e^{2 r_{\max}}\right)}{\gamma  \left(2 n+e^{2 r_{\max}}+1\right)^2}\right)}
\end{equation}

By choosing $\gamma^*=\frac{e^{-\tanh{r_{\max}}}}{n+1}$, 
an upper bound on $|f_{\text{on}|\text{st},j}(\beta_j,r_j,n,\gamma,s)|$ is given by
\begin{align}
    &|f_{\text{on}|\text{st},j}(\beta_j,r_j,n,\gamma,s)|\leq f_{\text{on}|\text{st},j}(\beta^*_j,r_j,n,\gamma^*,a_{\min})\\
    &=\frac{(n+1)^2 (2 n+1) \left(e^{4 r_{\max}}-1\right)^2 e^{r_j+2 r_{\max}+2 \tanh (r_{\max})}}{\left(2 n+e^{2 r_{\max}}+1\right)^2} \sqrt{\frac{1}{(n+1) \left(-e^{\tanh (r_{\max})}\right)+(n+1) e^{4 r_{\max}+\tanh (r_{\max})}-e^{2 (r_j+r_{\max})}+1}} \\
    &\times \sqrt{\frac{1}{(n+1) \left(-e^{2 r_j+\tanh (r_{\max})}\right)+(n+1) e^{2 r_j+4 r_{\max}+\tanh (r_{\max})}+e^{2 r_j}-e^{2 r_{\max}}}} 2^{F} \\
    &\times \left(\frac{e^{2 r_{\max}} \left(-\left(2 n^2+3 n+1\right) e^{\tanh (r_{\max})}+\left(2 n^2+3 n+1\right) e^{4 r_{\max}+\tanh (r_{\max})}+2 n+e^{2 r_{\max}}+1\right)}{\left(2 n+e^{2 r_{\max}}+1\right)^2}\right)^{F}, \\
    &F=-\frac{e^{-\tanh (r_{\max})} (n \tanh (r_{\max})+(n+1) \coth (r_{\max})-2 n-1)}{2 (n+1) (2 n+1)}.
\end{align}
Therefore, we can estimate the Torontonian of a matrix $\begin{pmatrix} C^T & R^* \\ R & C \end{pmatrix}$ with a success probability $1-\delta$ for number of samples $O(\log{\delta^{-1}}/\epsilon^2)$ within an additive-error as
\begin{equation}\label{eq:addtora}
    \epsilon \prod_{i=1}^M \sqrt{\frac{1}{2}+n(n+1)+(n+\frac{1}{2})\cosh{2r_i}}~f_{\text{on}|\text{st},i}(\beta^*,r_i,n,\gamma^*,a_{\min})\coloneqq \epsilon \prod_{i=1}^M T_i^A(n,r_i).
\end{equation}

\section{Estimation of matrix functions within multiplicative-errors}

The previous section investigates the efficient estimation schemes for various matrix functions within additive-errors. This section proposes a much stronger scheme, such as estimations within multiplicative-errors. We show that for highly classical input states, the $s$-PQDs representations of the outcome probabilities corresponding to those matrix functions can be written as integrals of log-concave functions. Then we can use an FPRAS to approximate the matrix functions.

\subsection{Permanent of Hermitian positive definite matrices}\label{app:mulper}
Recently, it has been proven that the multiplicative-error estimation of an HPSD matrix is NP-hard~\cite{meiburg2021inapproximability}. The case of a Hermitian positive definite matrix, i.e., $\lambda_{\min}>0$, is still unknown yet, but introduced an FPRAS for $1\leq \lambda_i \leq 2$~\cite{barvinok2021remark}. In this section, we reproduce the same result by showing the log-concavity for $\lambda_{\max}/\lambda_{\min}\leq 2$. By a slight generalization of the result in Ref.~\cite{barvinok2021remark}, we give a following lemma:
\begin{lemma}\label{lem2}
Let $a,b,c \geq 0$ and $q:\mathbb{R}^M \rightarrow \mathbb{R}_+$ is a positive semidefinite quadratic form. Then the function $(a+bq(x))e^{-cq(x)}$ is log-concave when $ca \geq b$.
\end{lemma}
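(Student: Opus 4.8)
\emph{Proof proposal.}
The plan is to reduce the multivariate statement to a one-dimensional monotonicity-and-concavity check via a composition argument. Writing $f(x)=(a+bq(x))e^{-cq(x)}$, the first step is to observe that $f=\psi\circ q$ with the single-variable function $\psi:[0,\infty)\to\mathbb{R}$, $\psi(t)=(a+bt)e^{-ct}$, and that a composition of this type is log-concave as soon as (i) $q$ is convex and nonnegative, which is automatic since a positive semidefinite quadratic form can be written $q(x)=x^{\top}Qx$ with $Q\succeq 0$, and (ii) $h(t)\coloneqq\log\psi(t)=\log(a+bt)-ct$ is concave and non-increasing on $[0,\infty)$. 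Indeed, for $\lambda\in[0,1]$, convexity of $q$ gives $q(\lambda x+(1-\lambda)y)\le\lambda q(x)+(1-\lambda)q(y)$; applying $h$ non-increasing and then $h$ concave yields $\log f(\lambda x+(1-\lambda)y)\ge\lambda\log f(x)+(1-\lambda)\log f(y)$, i.e. log-concavity of $f$.

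Thus the crux is the one-dimensional claim: $h$ is concave and non-increasing on $[0,\infty)$ precisely when $ca\ge b$. For $a>0$ this follows by direct differentiation, which I would not grind through in detail: $h'(t)=b/(a+bt)-c$ and $h''(t)=-b^{2}/(a+bt)^{2}\le 0$, so $h$ is always concave, while $b/(a+bt)$ is non-increasing in $t\ge 0$ with maximal value $b/a$ at $t=0$, so $h'\le 0$ on $[0,\infty)$ holds iff $b/a\le c$, i.e. iff $ca\ge b$. The case $a=0$ is forced by the hypothesis to also have $b=0$ (since $0\le b\le ca=0$), in which case $f(x)=a\,e^{-cq(x)}$ and $\log f=\log a-c\,q$ is concave because $q$ is convex and $c\ge 0$; the remaining all-zero case $a=b=0$ is trivial.

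I do not anticipate a genuine obstacle, but the one point requiring care is that concavity of $h$ alone is insufficient: one must compose $h$ with the \emph{convex} inner function $q$ rather than a concave one, so the monotonicity of $h$ is essential, and it is exactly this requirement, tested at the worst point $t=0$, that pins down the threshold $ca\ge b$ instead of something weaker. A minor accompanying remark is that for $a=0$, $b>0$ the function $b\,q(x)e^{-cq(x)}$ is in fact not log-concave (it vanishes on the null space of $Q$ while being positive nearby), which confirms that the hypothesis $ca\ge b$ correctly excludes that regime.
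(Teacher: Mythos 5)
Your proof is correct, and it takes a genuinely different route from the paper's. The paper verifies concavity of $\log f$ directly by restricting to an arbitrary affine line: along a line the quadratic form becomes $\tau^2+\gamma^2$ after an affine reparametrization, and the paper then checks $g''(\tau)\le 0$ for $g(\tau)=\log\bigl(a+b(\tau^2+\gamma^2)\bigr)-c(\tau^2+\gamma^2)$ by expanding the numerator into terms that are manifestly nonpositive once $ca\ge b$. You instead write $\log f=h\circ q$ with $h(t)=\log(a+bt)-ct$ and invoke the composition rule: a concave, non-increasing outer function composed with the convex (positive semidefinite) form $q$ is concave, and the hypothesis $ca\ge b$ is exactly $h'(0)\le 0$, i.e. monotonicity at the worst point $t=0$. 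Your argument is shorter, avoids the algebraic expansion, and makes transparent where the threshold $ca\ge b$ comes from; you also correctly flag the one subtlety, namely that concavity of $h$ alone would not suffice without monotonicity since $q$ is convex rather than affine --- a point the paper never needs to confront because its line restriction treats the composed function wholesale. What the paper's approach buys is self-contained one-variable calculus with no appeal to the monotone-composition lemma. Your treatment of the degenerate case $a=0$ is slightly garbled as written ($\log a$ is not defined there), but under the hypothesis that case forces $b=0$ and hence $f\equiv 0$, so nothing substantive is lost; the paper likewise tacitly restricts to $ca\ge b>0$.
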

\begin{proof}
It is enough to show the function
\begin{equation}
    h(x)=\log (a+bq(x))-cq(x)
\end{equation}
is concave when $ca \geq b > 0$. This leads to check that $h$ onto any affine line $x(\tau)=\alpha\tau+\beta$ with $\alpha,\beta \in \mathbb{R}^M$ is concave. Via the affine substitution $\tau \coloneqq (\tau-\beta)/\alpha$, what we need to check is $g''(\tau) \leq 0$ for all $\tau \in \mathbb{R}$, where
\begin{equation}
    g(\tau)=\log(a+b(\tau^2+\gamma^2))-c(\tau^2+\gamma^2).
\end{equation}
Then by straightforward calculation,
\begin{align}
    g''(\tau)&=-2c-\frac{4b^2\tau^2}{(a+b(\gamma^2+\tau^2))^2}+\frac{2b}{a+b(\gamma^2+\tau^2)} \nonumber \\
    &=\frac{2b(a+b(\gamma^2+\tau^2))-2c(a+b(\gamma^2+\tau^2))^2-4b^2\tau^2}{(a+b(\gamma^2+\tau^2))^2} \nonumber \\
    &=\frac{-2a(ca-b)-2b(\gamma^2+\tau^2)(ca-b)-2abc(\gamma^2+\tau^2)-4b^2\tau^2-2b^2(\gamma^2+\tau^2)^2}{(a+b(\gamma^2+\tau^2))^2} \nonumber \\
    &\leq 0~~\text{ for } ca \geq b. \nonumber
\end{align}
\end{proof}
From Eq.~(\ref{eq:perfor}),
\begin{equation}
    f_{\text{th},j}(\beta_j,n_i,\gamma,s)\propto
    \frac{8|\beta_j|^2+2(s^2-1)}{(s+1)^3} e^{-\left(\frac{2}{s+1}+\gamma\frac{2}{2n_\text{max}+1-s}\right)|\beta_j|^2}.
\end{equation}
Therefore, the condition for log-concavity of $ f_{\text{th},j}(\beta_j,n_i,\gamma,s)$ is written as
\begin{equation}
    \left(\frac{2}{s+1}+\gamma\frac{2}{2n_\text{max}+1-s}\right)2(s^2-1)\geq 8.
\end{equation}
After putting $\gamma=1$ and $s=s_{\max}=2n_{\min}+1$,
\begin{equation}
    \frac{n_{\max}}{n_{\max}+2}\leq n_{\min}.
\end{equation}
Substituting $n_i=\frac{\lambda_i}{1-\lambda_i}$,
\begin{equation}
    \frac{\lambda_{\max}}{\lambda_{\min}}\leq 2,
\end{equation}
as desired.

\subsection{Hafnian (Proof of Theorem~\ref{th:FPRAS})}\label{app:FPRAShaf}

When the input is a squeezed thermal state $\{r_i,n\}_{i=1}^M$, we can find out the condition for multiplicative estimation of the hafnian of a particular matrix. Note that the estimated function in the case of all single-photon outcomes is given by 

\begin{equation}
    f_{\text{st},j}(\beta_j,r_j,n,\gamma,s) \propto \frac{8|\beta_j|^2+2(s^2-1)}{(s+1)^3} e^{-\left(\frac{2}{s+1}+\gamma\frac{2}{a_{\max}-s}\right)|\beta_j|^2},
\end{equation}
where $a_{\max}=(2n+1)e^{2r_{\max}}$. Then from Lemma.~\ref{lem2}, the condition for log-concavity of $f_{\text{st},j}(\beta_j,r_j,n,\gamma,s)$ is 
\begin{equation}\label{eq:hafcondition}
    n \geq \frac{1}{4} \left(6 \sinh (2 r_{\max})+\sqrt{18 \cosh (4 r_{\max})-14}-2\right),
\end{equation}
where we put $\gamma=1$ and $s=s_{\max}=(2n+1)e^{-2r_{\max}}$. Thus for a matrix $A=\begin{pmatrix} R & B \\ B^T & R^*\end{pmatrix}$ satisfying conditions Eqs.~(\ref{eq:acondition1}) and (\ref{eq:acondition2}), $\text{Haf}(A)$ can be estimated within a multiplicative-error efficiently when Eq.~(\ref{eq:hafcondition}) holds. 

\subsection{Torontonian}\label{app:FPRAStor}

We can apply the same method to the Torontonian. Here, we use the following lemma:
\begin{lemma}\label{lem1}
Let $a,b,c \geq 0$ and $q:\mathbb{R}^M \rightarrow \mathbb{R}_+$ is a positive semidefinite quadratic form. Then the function $(a-be^{-bq(x)})e^{-cq(x)}$ is log-concave when $a\geq \frac{b^2+2bc}{c}$.
\end{lemma}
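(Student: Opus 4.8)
The plan is to argue exactly as in the proof of Lemma~\ref{lem2}, with the polynomial prefactor there replaced by $a-be^{-bq(x)}$. Log-concavity of $(a-be^{-bq(x)})e^{-cq(x)}$ is equivalent to concavity of
\begin{equation}
    h(x)=\log\!\big(a-be^{-bq(x)}\big)-cq(x),
\end{equation}
and it suffices to verify concavity of $h$ along an arbitrary affine line $x(\tau)=\alpha\tau+\beta$. Restricting the positive-semidefinite form $q$ to the line gives a nonnegative quadratic $q(x(\tau))=A\tau^2+B\tau+C$ with $A\ge0$; if $A=0$ then positive-semidefiniteness of $q$ forces $B=0$ too, so $h$ is constant on the line and there is nothing to prove, while if $A>0$ one completes the square and rescales $\tau$ so that $q(x(\tau))=\tau^2+\gamma^2$ for some constant $\gamma\ge0$. (One may assume $b>0$, since $b=0$ makes the function $ae^{-cq}$ which is trivially log-concave, and $c>0$, as the displayed condition requires; then $a\ge(b^2+2bc)/c>b$ guarantees $a-be^{-bq}>0$.) Thus the whole claim reduces to showing $g''(\tau)\le0$ for all $\tau\in\mathbb{R}$, where $g(\tau)=\log\!\big(a-be^{-b(\tau^2+\gamma^2)}\big)-c(\tau^2+\gamma^2)$.

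Next I would set $v\coloneqq\tau^2+\gamma^2\ge0$ and $\phi(v)\coloneqq\log(a-be^{-bv})-cv$, so that $g(\tau)=\phi(v(\tau))$ and, by the chain rule, $g''(\tau)=2\phi'(v)+4\tau^2\phi''(v)$. A short computation gives
\begin{equation}
    \phi'(v)=\frac{b^2e^{-bv}}{a-be^{-bv}}-c,\qquad
    \phi''(v)=\frac{-ab^3e^{-bv}}{\big(a-be^{-bv}\big)^2}\le0 .
\end{equation}
Since $\phi''(v)\le0$, the term $4\tau^2\phi''(v)$ is never positive, so it is enough to show $\phi'(v)\le0$ for every $v\ge0$, i.e.\ $b(b+c)\,e^{-bv}\le ca$. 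Because $0<e^{-bv}\le1$, this holds as soon as $ca\ge b(b+c)$; the hypothesis $a\ge(b^2+2bc)/c$ gives $ca\ge b^2+2bc\ge b^2+bc=b(b+c)$, so $\phi'(v)\le0$ throughout and hence $g''(\tau)\le0$, which finishes the argument.

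The two derivative evaluations and the reduction to a single affine line are routine and are essentially copied from the proof of Lemma~\ref{lem2}. The only genuinely new point — and the place to be a little careful — is that here the analogue of $\phi'$ is not a rational function but carries the transcendental factor $e^{-bv}$; the clean way to handle it is to observe that the sign of $\phi'(v)$ is controlled by the monotone quantity $w=e^{-bv}\in(0,1]$, so the worst case occurs at $w=1$, i.e.\ $v=0$. One could alternatively keep the $4\tau^2\phi''(v)$ term rather than discarding it, but since it is nonpositive it only strengthens the inequality and is not needed.
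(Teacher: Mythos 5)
Your proof is correct, and its skeleton matches the paper's: reduce log-concavity to concavity along affine lines, normalize the restricted quadratic to $\tau^2+\gamma^2$, and show $g''(\tau)\le 0$ for $g(\tau)=\log\big(a-be^{-b(\tau^2+\gamma^2)}\big)-c(\tau^2+\gamma^2)$. Where you diverge is in how that inequality is verified. The paper differentiates $g$ directly and grinds the resulting expression into a single fraction, bounding the numerator term by term (using $e^{b(\gamma^2+\tau^2)}\ge 1$ once) until the condition $ca\ge b^2+2bc$ makes it nonpositive. You instead factor the computation through $\phi(v)=\log(a-be^{-bv})-cv$ with $v=\tau^2+\gamma^2$, so that $g''=2\phi'(v)+4\tau^2\phi''(v)$, observe $\phi''\le 0$ unconditionally, and reduce everything to $\phi'\le 0$, i.e.\ $b(b+c)e^{-bv}\le ca$, with the worst case at $v=0$. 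This is the standard ``nonincreasing concave composed with convex'' argument, and it buys three things: a cleaner verification with no long algebra; explicit handling of points the paper leaves implicit (positivity of $a-be^{-bq}$, the degenerate line with $A=0$, the trivial cases $b=0$ and the need for $c>0$); and, in fact, a slightly stronger statement, since your argument only needs $ca\ge b(b+c)$, i.e.\ $a\ge b+b^2/c$, which is weaker than the stated hypothesis $a\ge (b^2+2bc)/c$ — and, given that $\phi''\le 0$ always, $\phi'\le 0$ is also necessary when $q$ ranges over all values, so your condition is the sharp one for this family. No gaps; the derivative formulas and the chain-rule decomposition check out.
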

\begin{proof}
By the same argument in Lemma.~\ref{lem2}, what we need to check is $g''(\tau) \leq 0$ for all $\tau \in \mathbb{R}$, where
\begin{equation}
    g(\tau)=\log(a-be^{-b(\tau^2+\gamma^2)})-c(\tau^2+\gamma^2).
\end{equation}
By a straightforward calculation,
\begin{align}
    g''(\tau)&=-2c+\frac{-2b^3-2ab^2(2b\tau^2-1)e^{b(\gamma^2+\tau^2)}}{\left(b-ae^{b(\gamma^2+\tau^2)}\right)^2} \\
    &=\frac{-2c\left(b^2-2abe^{b(\gamma^2+\tau^2)}+a^2e^{2b(\gamma^2+\tau^2)}\right)-2b^3-4ab^3\tau^2e^{b(\gamma^2+\tau^2)}+2ab^2e^{b(\gamma^2+\tau^2)}}{\left(b-ae^{b(\gamma^2+\tau^2)}\right)^2}\\
    &=\frac{-2cb^2e^{-b(\gamma^2+\tau^2)}+4abc-2a^2ce^{b(\gamma^2+\tau^2)}-2b^3e^{-b(\gamma^2+\tau^2)}-4ab^3\tau^2+2ab^2}{e^{-b(\gamma^2+\tau^2)}\left(b-ae^{b(\gamma^2+\tau^2)}\right)^2}\\
    &\leq \frac{-2cb^2e^{-b(\gamma^2+\tau^2)}+4abc-2a^2c-2b^3e^{-b(\gamma^2+\tau^2)}-4ab^3\tau^2+2ab^2}{e^{-b(\gamma^2+\tau^2)}\left(b-ae^{b(\gamma^2+\tau^2)}\right)^2}\\
    &\leq 0~~\text{ for } a\geq \frac{b^2+2bc}{c}. \nonumber
\end{align}
\end{proof}

First we consider thermal input state $\{n_i\}_{i=1}^M$. From  Eq.~(\ref{eq:fonth}),
\begin{equation}
    f_{\text{on}|\text{th},j}(\beta_j,n_j,\gamma,s) \propto \left( 1-\frac{2}{s+1}e^{-\frac{2}{s+1}|\beta_j|^2}\right) e^{-\gamma\frac{2}{2n_\text{max}+1-s}|\beta_j|^2}.
\end{equation}
By Lemma.~\ref{lem1}, the function $(a-be^{-b|\beta|^2})e^{-c|\beta|^2}$ is log-concave when $a\geq \frac{b^2+2bc}{c}$. Consequently, the condition for log-concavity of $f_{\text{on}|\text{th},j}(\beta_j,n_j,\gamma,s)$ is written as
\begin{equation}
    \frac{2(3+2n_{\max}+s)}{(1+s)^2}\leq 1.
\end{equation}
When $\gamma=1$ and $s=s_{\max}=2n_{\min}+1$, this condition yields
\begin{equation}\label{eq:torcmulti}
    \frac{1}{2}\leq \lambda_{\min}\leq \lambda_{\max}\leq \frac{-\lambda_{\min}^2+3\lambda_{\min}-1}{\lambda_{\min}},
\end{equation}
where $\lambda_j=\frac{n_j}{1+n_j}$. Thus for an HPSD matrix $B$, $\text{Tor}\begin{pmatrix} B^T & 0 \\ 0 & B \end{pmatrix}$ can be estimated within a multiplicative-error when eigenvalues $\{\lambda_i\}$ of $C$ satisfy the condition Eq.~(\ref{eq:torcmulti}). Note that this condition is more stringent than the permanent case in which there is no restriction for $\lambda_{\max}$ when $\lambda_{\min}\geq \frac{1}{2}$.

Next, when the input is a squeezed thermal state $\{r_i,n\}_{i=1}^M$, $f_{\text{on}|\text{st},j}(\beta_j,r_j,n,\gamma,s)$ is given by
\begin{equation}
    f_{\text{on}|\text{st},j}(\beta_j,r_j,n,\gamma,s) \propto \left( 1-\frac{2}{s+1}e^{-\frac{2}{s+1}|\beta_j|^2}\right) e^{-\gamma\frac{2}{a_{\max}-s}|\beta_j|^2},
\end{equation}
where $a_{\max}=(2n+1)e^{2r_{\max}}$. If we set $\gamma=1$ and $s=s_{\max}=(2n+1)e^{-2r_{\max}}$, the condition for log-concavity of $f_{\text{on}|\text{st},j}(\beta_j,r_j,n,\gamma,s)$ is given by using Lemma.~\ref{lem1}
\begin{equation}\label{eq:tormul}
    n\geq \frac{1}{2}\left(e^{2r_{\max}}\sqrt{e^{8r_{\max}}+3}+e^{6r_{\max}}-1\right).
\end{equation}
Thus for a matrix $A'=\begin{pmatrix} B^T & R^* \\R & B\end{pmatrix}$ satisfying Eqs.~(\ref{eq:acondition1}) and (\ref{eq:acondition2}), we can estimate the Torontonian with multiplicative-error when the above condition is satisfied.

\section{Lower and upper bounds on matrix functions}\label{app:bounds}
So far, we have suggested polynomial-time randomized algorithms for estimating various matrix functions. Here, we provide lower and upper bounds on the matrix functions by using $s$-PQD and appropriately choosing $s$, which is summarized in the following Table~\ref{table:upperlower}.

\begin{table*}[!ht]
\setlength{\tabcolsep}{0.5em}
{\renewcommand{\arraystretch}{2}
\begin{ruledtabular}
\begin{tabularx}{\textwidth}{P{1.5cm}||P{8.5cm}|P{7.5cm}}
     & Upper bound &  Lower bound   \\
    \hline
    \hline
 $|\text{Haf}(R)|^2$ & ?  & ? \\
 \hline
 $\text{Per}(B)$  & $\prod_{i}^M G_i(\lambda_i)~(*)$~Eq.~(\ref{eq:upper})  & $\prod_i^M\frac{\lambda_{\min}^2}{\lambda_i}$~\cite{chakhmakhchyan2017quantum}    \\
 \hline
 Haf($A$) &$\prod_{i}^M G^H_i(r_i,n)~(*)$~Eq.~(\ref{eq:uphaf}) & $\prod_{i}^M L^H_i(r_i,n)~(*)$~Eq.~(\ref{eq:lowhaf})    \\
 \hline
 Tor($R'$) & ? & ?  \\
 \hline
 Tor($B'$) & $\prod_{i}^M \frac{\lambda_{\max}^2}{\lambda_i(1-\lambda_{\max})}~(*)$ & $\prod_{i}^M \frac{\lambda_{\min}^2}{\lambda_i(1-\lambda_{\min})}~(*)$     \\
 \hline
 Tor($A'$) & $\prod^M_{i=1}G_i^T(r_i,n)~(*)$~Eq.~(\ref{eq:uptor})&$\prod_{i}^M L^T_i(r_i,n)~(*)$~Eq.~(\ref{eq:lowtor})  \\
\end{tabularx}
\end{ruledtabular}
}
\caption{Upper and lower bounds on various matrix functions.}
\label{table:upperlower}
\end{table*}
\subsection{Permanent of Hermitian positive semidefinite matrices}
From Eq.~(\ref{eq:pergbs}), the permanent of an HPSD matrix is connected to a GBS circuit with a thermal state input. The probability of all single-photon measurements is written as
\begin{align}\label{eq:perlower}
     p_\text{th}&=\int d^{2M}\bm{\alpha} \prod^M_{i=1}\frac{2}{\pi(2n_i+1-s)} e^{-\frac{2}{(2n_i+1-s)}|\alpha_i|^2} \prod_{j=1}^M \frac{8|\beta_j|^2+2(s^2-1)}{(s+1)^3} e^{-\frac{2}{s+1}|\beta_j|^2} \\
     &\leq \frac{2}{\prod_{i=1}^M\pi(2n_i+1-s)}\int d^{2M}\bm{\alpha} \prod^M_{i=1}e^{-\frac{2}{2n_{\max}+1-s}|\alpha_i|^2} \prod_{j=1}^M \frac{8|\beta_j|^2+2(s^2-1)}{(s+1)^3} e^{-\frac{2}{s+1}|\beta_j|^2} \\
     &=\frac{2}{\prod_{i=1}^M\pi(2n_i+1-s)}\int d^{2M}\bm{\beta}\prod_{j=1}^M\frac{8|\beta_j|^2+2(s^2-1)}{(s+1)^3} e^{-\left(\frac{2}{s+1}+\frac{2}{2n_{\max}+1-s}\right)|\beta_j|^2} \\
     &=\prod_{i=1}^M(1-\lambda_i)\frac{\lambda_{\max}\{\lambda_{\max}(2+\lambda_{\min})-2-3\lambda_{\min}\}}{\lambda_i(2+\lambda_{\min})-2-3\lambda_{\min}},
\end{align}
where the inequality is valid when $s \geq 1$. Note that we take $s=2n_{\min}+1$ and $n_i=\frac{\lambda_i}{1-\lambda_i}$ in the last equality.

Thus the permanent of an HPSD matrix $B$ are bounded from above as
\begin{equation}\label{eq:upper}
    \Per(B)=p_{\text{th}}{\cal Z'} \leq \prod_{i=1}^M\frac{\lambda_{\max}\{\lambda_{\max}(2+\lambda_{\min})-2-3\lambda_{\min}\}}{\lambda_i(2+\lambda_{\min})-2-3\lambda_{\min}}\coloneqq \prod_{i=1}^M G_i(\lambda_i).
\end{equation}
For the lower bound, our method gives the same result in Ref.~\cite{chakhmakhchyan2017quantum}.

\subsection{Hafnian}
Let us set $a_{i,\pm}(r_i,n)=(2n+1)e^{\pm 2r_i}$, $a_{\min}=(2n+1)e^{-2r_{\max}}$, and $a_{\max}=(2n+1)e^{2r_{\max}}$. When the input state is a squeezed thermal state $\{r_i,n\}_{i=1}^M$ with a fixed $n$, the probability of all single-photon detection is
\begin{align}
   p_{\text{st}}&=\int d^{2M}\bm{\alpha} \prod^{M}_{i=1}\frac{2}{\pi \sqrt{(a_+(r_i,n)-s)(a_-(r_i,n)-s)}}e^{-\frac{2\alpha_{ix}^2}{a_+(r_i,n)-s}-\frac{2\alpha_{iy}^2}{a_-(r_i,n)-s}} \prod_{j=1}^{M} \frac{8|\beta_j|^2+2(s^2-1)}{(s+1)^3} e^{-\frac{2|\beta_j|^2}{s+1}} \\
   &\geq \frac{2}{\prod^M_{i=1}\pi \sqrt{(a_+(r_i,n)-s)(a_-(r_i,n)-s)}}\int d^{2M}\bm{\alpha} \prod^{M}_{i=1}e^{-\frac{2|\alpha_{i}|^2}{a_{\min}-s}} \prod_{j=1}^{M} \frac{8|\beta_j|^2+2(s^2-1)}{(s+1)^3} e^{-\frac{2|\beta_j|^2}{s+1}} \\
   &=\frac{2}{\prod^M_{i=1}\pi \sqrt{(a_+(r_i,n)-s)(a_-(r_i,n)-s)}}\int d^{2M}\bm{\beta} \prod_{j=1}^{M} \frac{8|\beta_j|^2+2(s^2-1)}{(s+1)^3} e^{-\left(\frac{2}{s+1}+\frac{2}{a_{\min}-s}\right)|\beta_{j}|^2}\\
   &=\prod_{i=1}^M  \frac{2 \left(-2 n+e^{2 r_{\max}}-1\right)^2}{\left(2 n+e^{2 r_{\max}}+1\right)^2 \sqrt{-2 (2 n+1) \cosh (2 r_i)+4 n (n+1)+2}},
\end{align}
where the inequality holds for $s \geq 1$, and we put $s=1$ for the last equality. Here, assume $a_{\min}\geq 1$. Then for a matrix $A=\begin{pmatrix} R & B \\ B^T & R^*\end{pmatrix}$ satisfying Eqs.~(\ref{eq:acondition1}) and (\ref{eq:acondition2}),
a lower bound of the Hafnian is written as
\begin{align}\label{eq:lowhaf}
    &\Haf(A)=p_{\text{st}}{\cal Z''}\geq \prod_{i=1}^M \left(\frac{-2 n+e^{2 r_{\max}}-1}{2 n+e^{2 r_{\max}}+1}\right)^2\frac{\sqrt{\frac{1}{2}+n(n+1)+(n+\frac{1}{2})\cosh{2r_i}}}{\sqrt{\frac{1}{2}+n(n+1)-(n+\frac{1}{2})\cosh{2r_i}}}\coloneqq \prod_{i=1}^M L_i^H(r_i,n),\\
    &{\cal Z}''=\sqrt{|V_Q|}=\prod_{i=1}^M \sqrt{\frac{1}{2}+n(n+1)+(n+\frac{1}{2})\cosh{2r_i}}.
\end{align}
Similarly, an upper bound can be obtained as
\begin{align}
   p_{\text{st}}&=\int d^{2M}\bm{\alpha} \prod^{M}_{i=1}\frac{2}{\pi \sqrt{(a_+(r_i,n)-s)(a_-(r_i,n)-s)}}e^{-\frac{2\alpha_{ix}^2}{a_+(r_i,n)-s}-\frac{2\alpha_{iy}^2}{a_-(r_i,n)-s}} \prod_{j=1}^{M} \frac{8|\beta_j|^2+2(s^2-1)}{(s+1)^3} e^{-\frac{2|\beta_j|^2}{s+1}} \\
   &\leq \frac{2}{\prod^M_{i=1}\pi \sqrt{(a_+(r_i,n)-s)(a_-(r_i,n)-s)}}\int d^{2M}\bm{\alpha} \prod^{M}_{i=1}e^{-\frac{2|\alpha_{i}|^2}{a_{\max}-s}} \prod_{j=1}^{M} \frac{8|\beta_j|^2+2(s^2-1)}{(s+1)^3} e^{-\frac{2|\beta_j|^2}{s+1}} \\
   &=\frac{2}{\prod^M_{i=1}\pi \sqrt{(a_+(r_i,n)-s)(a_-(r_i,n)-s)}}\int d^{2M}\bm{\beta} \prod_{j=1}^{M} \frac{8|\beta_j|^2+2(s^2-1)}{(s+1)^3} e^{-\left(\frac{2}{s+1}+\frac{2}{a_{\max}-s}\right)|\beta_{j}|^2}\\
   &=\prod_{i=1}^M \left(\frac{e^{r_{\max}}n+\sinh{r_{\max}}}{(1+n)\cosh{r_{\max}}+n\sinh{r_{\max}}}\right)^2\frac{1}{\sqrt{\frac{1}{2}+n(n+1)-(n+\frac{1}{2})\cosh{2r_i}}},
\end{align}
where we put $s=1$ for the last inequality. Consequently, an upper bound is obtained as
\begin{equation}\label{eq:uphaf}
\Haf(A)=p_{\text{st}}{\cal Z''}\leq \prod_{i=1}^M \left(\frac{e^{r_{\max}}n+\sinh{r_{\max}}}{(1+n)\cosh{r_{\max}}+n\sinh{r_{\max}}}\right)^2\frac{\sqrt{\frac{1}{2}+n(n+1)+(n+\frac{1}{2})\cosh{2r_i}}}{\sqrt{\frac{1}{2}+n(n+1)-(n+\frac{1}{2})\cosh{2r_i}}}\coloneqq \prod_{i=1}^M G_i^H(r_i,n).
\end{equation}

\subsection{Torontonian}
First, we consider thermal state input $\{n_i\}_{i=1}^M$ and the probability of all ``click" outcomes is written as
\begin{align}
    p_{\text{on}|\text{th}}&=\int d^{2M}\bm{\alpha} \prod^{M}_{i=1}\frac{2}{\pi(2n_i+1-s)} e^{-\frac{2}{2n_i+1-s}|\alpha_i|^2}\prod_{j=1}^{M} \left( 1-\frac{2}{s+1}e^{-\frac{2}{s+1}|\beta_j|^2}\right)\\
    &\geq \frac{2}{\prod_{i=1}^M\pi(2n_i+1-s)}\int d^{2M}\bm{\alpha} \prod^M_{i=1}e^{-\frac{2}{2n_{\min}+1-s}|\alpha_i|^2} \prod_{j=1}^{M} \left( 1-\frac{2}{s+1}e^{-\frac{2}{s+1}|\beta_j|^2}\right) \\
    &=\frac{2}{\prod_{i=1}^M\pi(2n_i+1-s)}\int d^{2M}\bm{\beta}  \prod_{j=1}^{M} \left( 1-\frac{2}{s+1}e^{-\frac{2}{s+1}|\beta_j|^2}\right)e^{-\frac{2}{2n_{\min}+1-s}|\beta_j|^2} \\
    &=\prod_{i=1}^M \frac{(1-\lambda_i)\lambda_{\min}^2}{\lambda_i(1-\lambda_{\min})},
\end{align}
where the inequality is valid when $s\geq 1$, and we take $s=1$ for the last equality. Thus for an $M \times M$ HPSD matrix $B$, the Torontonian of $\begin{pmatrix}B^T &0\\0 &B \end{pmatrix}$ is
\begin{equation}
    \text{Tor}\begin{pmatrix}B^T &0\\0 &B \end{pmatrix} = p_{\text{on}|\text{th}}{\cal Z'}\geq\prod_{i=1}^M \frac{\lambda_{\min}^2}{\lambda_i(1-\lambda_{\min})}.
\end{equation}

Similarly, we also obtain an upper bound as
\begin{align}
    p_{\text{on}|\text{th}}&=\int d^{2M}\bm{\alpha} \prod^{M}_{i=1}\frac{2}{\pi(2n_i+1-s)} e^{-\frac{2}{2n_i+1-s}|\alpha_i|^2}\prod_{j=1}^{M} \left( 1-\frac{2}{s+1}e^{-\frac{2}{s+1}|\beta_j|^2}\right)\\
    &\leq \frac{2}{\prod_{i=1}^M\pi(2n_i+1-s)}\int d^{2M}\bm{\alpha} \prod^M_{i=1}e^{-\frac{2}{2n_{\max}+1-s}|\alpha_i|^2} \prod_{j=1}^{M} \left( 1-\frac{2}{s+1}e^{-\frac{2}{s+1}|\beta_j|^2}\right) \\
    &=\frac{2}{\prod_{i=1}^M\pi(2n_i+1-s)}\int d^{2M}\bm{\beta}  \prod_{j=1}^{M} \left( 1-\frac{2}{s+1}e^{-\frac{2}{s+1}|\beta_j|^2}\right)e^{-\frac{2}{2n_{\max}+1-s}|\beta_j|^2} \\
    &=\prod_{i=1}^M \frac{(1-\lambda_i)\lambda_{\max}^2}{\lambda_i(1-\lambda_{\max})},
\end{align}
where we take $s=1$ for the last equality. Consequently,
\begin{equation}
    \text{Tor}\begin{pmatrix}B^T &0\\0 &B \end{pmatrix} = p_{\text{on}|\text{th}}{\cal Z'}\leq \prod_{i=1}^M \frac{\lambda_{\max}^2}{\lambda_i(1-\lambda_{\max})}.
\end{equation}
Next, when the input state is a squeezed thermal state $\{r_i,n\}_{i=1}^M$. Then the probability of all ``click'' detection is
\begin{align}
   p_{\text{on}|\text{st}}&=\int d^{2M}\bm{\alpha} \prod^{M}_{i=1}\frac{2}{\pi \sqrt{(a_+(r_i,n)-s)(a_-(r_i,n)-s)}}e^{-\frac{2\alpha_{ix}^2}{a_+(r_i,n)-s}-\frac{2\alpha_{iy}^2}{a_-(r_i,n)-s}} \prod_{j=1}^{M} \left( 1-\frac{2}{s+1}e^{-\frac{2}{s+1}|\beta_j|^2}\right) \\
   &\geq \frac{2}{\prod^M_{i=1}\pi \sqrt{(a_+(r_i,n)-s)(a_-(r_i,n)-s)}}\int d^{2M}\bm{\alpha} \prod^{M}_{i=1}e^{-\frac{2|\alpha_{i}|^2}{a_{\min}-s}} \prod_{j=1}^{M} \left( 1-\frac{2}{s+1}e^{-\frac{2}{s+1}|\beta_j|^2}\right) \\
   &=\frac{2}{\prod^M_{i=1}\pi \sqrt{(a_+(r_i,n)-s)(a_-(r_i,n)-s)}}\int d^{2M}\bm{\beta} \prod_{j=1}^{M} \left( 1-\frac{2}{s+1}e^{-\frac{2}{s+1}|\beta_j|^2}\right)e^{-\frac{2|\beta_{j}|^2}{a_{\min}-s}}\\
   &=\prod_{i=1}^M \frac{e^{-2r_{\max}}(1-e^{2r_{\max}}+2n)^2}{2(1+e^{2r_{\max}}+2n)}\frac{1}{\sqrt{\frac{1}{2}+n(n+1)-(n+\frac{1}{2})\cosh{2r_i}}},
\end{align}
where $a_{i,\pm}(r_i,n)=(2n+1)e^{\pm 2r_i}$, $a_{\min}=(2n+1)e^{-2r_{\max}}$, $a_{\max}=(2n+1)e^{2r_{\max}}$, and we put $s=1$ for the last inequality. Assume $a_{\min}\geq 1$. Then for a matrix $A'=\begin{pmatrix} B^T & R^* \\R & B\end{pmatrix}$ satisfying Eqs.~(\ref{eq:acondition1}) and (\ref{eq:acondition2}), a lower bound of the Torontonian is given by
\begin{equation}\label{eq:lowtor}
    \text{Tor}(A')= p_{\text{on}|\text{st}}{\cal Z}'' \geq \prod_{i=1}^M \frac{e^{-2r_{\max}}(1-e^{2r_{\max}}+2n)^2}{2(1+e^{2r_{\max}}+2n)}\frac{\sqrt{\frac{1}{2}+n(n+1)+(n+\frac{1}{2})\cosh{2r_i}}}{\sqrt{\frac{1}{2}+n(n+1)-(n+\frac{1}{2})\cosh{2r_i}}}\coloneqq \prod^M_{i=1}L_i^T(r_i,n).
\end{equation}

An upper bound can be obtained by a similar method, such as
\begin{align}
   p_{\text{on}|\text{st}}&=\int d^{2M}\bm{\alpha} \prod^{M}_{i=1}\frac{2}{\pi \sqrt{(a_+(r_i,n)-s)(a_-(r_i,n)-s)}}e^{-\frac{2\alpha_{ix}^2}{a_+(r_i,n)-s}-\frac{2\alpha_{iy}^2}{a_-(r_i,n)-s}} \prod_{j=1}^{M} \left( 1-\frac{2}{s+1}e^{-\frac{2}{s+1}|\beta_j|^2}\right) \\
   &\leq \frac{2}{\prod^M_{i=1}\pi \sqrt{(a_+(r_i,n)-s)(a_-(r_i,n)-s)}}\int d^{2M}\bm{\alpha} \prod^{M}_{i=1}e^{-\frac{2|\alpha_{i}|^2}{a_{\max}-s}} \prod_{j=1}^{M} \left( 1-\frac{2}{s+1}e^{-\frac{2}{s+1}|\beta_j|^2}\right) \\
   &=\frac{2}{\prod^M_{i=1}\pi \sqrt{(a_+(r_i,n)-s)(a_-(r_i,n)-s)}}\int d^{2M}\bm{\beta} \prod_{j=1}^{M} \left( 1-\frac{2}{s+1}e^{-\frac{2}{s+1}|\beta_j|^2}\right)e^{-\frac{2|\beta_{j}|^2}{a_{\max}-s}}\\
   &=\prod_{i=1}^M \frac{e^{r_{\max}}(e^{r_{\max}}n+\sinh{r_{\max}})^2}{(1+n)\cosh{r_{\max}}+n\sinh{r_{\max}}}\frac{1}{\sqrt{\frac{1}{2}+n(n+1)-(n+\frac{1}{2})\cosh{2r_i}}},
\end{align}
where we put $s=1$ for the last inequality. Consequently, an upper bound of the $\text{Tor}(A')$ is obtained as
\begin{equation}\label{eq:uptor}
    \text{Tor}(A')= p_{\text{on}|\text{st}}{\cal Z}'' \leq \prod_{i=1}^M \frac{e^{r_{\max}}(e^{r_{\max}}n+\sinh{r_{\max}})^2}{(1+n)\cosh{r_{\max}}+n\sinh{r_{\max}}}\frac{\sqrt{\frac{1}{2}+n(n+1)+(n+\frac{1}{2})\cosh{2r_i}}}{\sqrt{\frac{1}{2}+n(n+1)-(n+\frac{1}{2})\cosh{2r_i}}}\coloneqq \prod^M_{i=1}G_i^T(r_i,n).
\end{equation}

\section{Simulability of Gaussian boson sampling (Proof of Theorem~\ref{th:GBS})}\label{app:GBSsimul}
Our approximation algorithm for outcome probabilities of a linear optical circuit have applications not only for the matrix functions, but also for Gaussian boson sampling, which is crucial for the demonstration of quantum supremacy~\cite{hamilton2017gaussian}.
From the results in Refs.~\cite{pashayan2020estimation}, we have three level of a hierarchy of notions of classical simulation as following:
\begin{enumerate}
    \item {Poly-box}
    : Inverse-polynomial additive-error approximation of any outcome probabilities including any marginals.
    \item {$\epsilon$-simulation}
    : Approximate sampling simulation of probability distributions with $\epsilon$-error in total variation distance.
    \item {Multiplicative precision estimation}
    : multiplicative-error approximation of any outcome probabilities including any marginals.
\end{enumerate}

A poly-box can be promoted to $\epsilon$-simulation when the outcomes are poly-sparse, and multiplicative precision estimator implies $\epsilon$-simulation~\cite{pashayan2020estimation}.  
In our work, we can investigate this hierarchy in the GBS via a degree of classicality of the input state, $s_{\max}$. To do that, we consider a lossy GBS, in which the input state is product of lossy squeezed state having the covariance matrix on $i$th mode with $V_i=\frac{1}{2}\begin{pmatrix}\eta e^{2r_i}+1-\eta & 0 \\ 0 & \eta e^{-2r_i}+1-\eta  \end{pmatrix}\coloneqq  \frac{1}{2}\begin{pmatrix}a_{i+}(\eta,r_i) & 0 \\ 0 & a_{i-}(\eta,r_i)\end{pmatrix}$. From Eq.~(\ref{eq:spqd}), $-1<s\leq s_{\max}\leq 1$ for a lossy squeezed state, and $s_{\max}=a_-(\eta,r_{\max})$. Note that the $s_{\max}$ goes to $0$ as the maximum squeezing parameter $r_{\max} \rightarrow \infty$, which is consistent with the fact that a general Gaussian state can be well described by Wigner distribution.
Then the outcome probability $p_{\text{GBS}}(\bm{m})$ is given by
\begin{align}
    &p_{\text{GBS}}(\bm{m})=\pi^M \int d^{2M}\bm{\alpha} \prod_{i=1}^M  W^{(s)}_{V_{i}}(\alpha_i)\prod_{j=1}^M W^{(-s)}_{\Pi_{m_j}}(\beta_j) \\
     &=\pi^M\int d^{2M}\bm{\alpha}\prod_{i=1}^M \frac{1}{\pi \sqrt{\det{(V_{i}-s/2)}}}e^{-\bm{\alpha}_i {(V_{i}-s/2)}^{-1}\bm{\alpha}_i^T}\prod_{j=1}^M \frac{2}{\pi(s+1)}\left(\frac{s-1}{s+1} \right)^{m_j} \text{L}_{m_j} \left(\frac{4|\beta_j|^2}{1-s^2} \right) e^{-\frac{2|\beta_j|^2}{s+1}}\\
     &=\pi^M\int d^{2M}\bm{\alpha} \prod_{i=1}^M \frac{2}{\pi \sqrt{(a_{i+}(\eta,r_i)-s)(a_{i-}(\eta,r_i)-s)}}e^{-\frac{2\alpha_{ix}^2}{a_{i+}(\eta,r_i)-s}-\frac{2\alpha_{iy}^2}{a_{i-}(\eta,r_i)-s}}\prod_{j=1}^M \frac{2}{\pi(s+1)}\left(\frac{s-1}{s+1} \right)^{m_j} \text{L}_{m_j} \left(\frac{4|\beta_j|^2}{1-s^2} \right) e^{-\frac{2|\beta_j|^2}{s+1}}\\
     &=\int d^{2M}\bm{\alpha}\prod^{M}_{i=1}\frac{2}{\pi} \sqrt{\frac{a_{+}(\eta,r_{\max})-s-\gamma(a_{i+}(\eta,r_i)-s)}{(a_{i+}(\eta,r_i)-s)(a_{+}(\eta,r_{\max})-s)}}\sqrt{\frac{a_{+}(\eta,r_{\max})-s-\gamma(a_{i-}(\eta,r_i)-s)}{(a_{i-}(\eta,r_i)-s)(a_{+}(\eta,r_{\max})-s)}}\\
     &\times e^{-\left(\frac{2}{a_{i+}(\eta,r_i)-s}-\gamma \frac{2}{a_{+}(\eta,r_{\max})-s} \right)\alpha_{ix}^2-\left(\frac{2}{a_{i-}(\eta,r_i)-s}-\gamma \frac{2}{a_{+}(\eta,r_{\max})-s} \right)\alpha_{iy}^2}\\
    &\times \prod_{j=1}^{M} \sqrt{\frac{a_{+}(\eta,r_{\max})-s}{a_{+}(\eta,r_{\max})-s-\gamma(a_{i+}(\eta,r_i)-s)}}\sqrt{\frac{a_{+}(\eta,r_{\max})-s}{a_{+}(\eta,r_{\max})-s-\gamma(a_{i-}(\eta,r_i)-s)}}\\
    &\times \frac{2}{s+1}\left(\frac{s-1}{s+1} \right)^{m_j} \text{L}_{m_j} \left(\frac{4|\beta_j|^2}{1-s^2} \right) e^{-\left(\frac{2}{s+1}+\gamma \frac{2}{a_{+}(\eta,r_{\max})-s}\right)|\beta_j|^2} \\
    &\coloneqq \int d^{2M}\bm{\alpha}\prod^M_{i=1}P_{\text{GBS},i}(\alpha_i,\eta,r_i,\gamma,s)\prod^M_{j=1}f_{\text{GBS},j}(\beta_j,\eta,r_j,m_j,\gamma,s),
\end{align}
where $a_{\pm}(\eta,r_{\max})=\eta e^{\pm 2r_{\max}}+1-\eta$, and $\gamma \in (0,1]$ is a parameter modulating the Gaussian factor such as $\gamma \rightarrow 1$ ($\gamma=0$) means maximum (no) shifting. The maximum shifting is limited by the maximum squeezing parameter $r_{\max}$. To check the poly-box condition, let us first consider the single-mode estimate for the single-photon outcome.  Explicitly,
\begin{align}\label{eq:gbs1}
    f_{\text{GBS},j}(\beta_j,\eta,r_j,1,\gamma,s)&=
    \sqrt{\frac{a_{+}(\eta,r_{\max})-s}{a_{+}(\eta,r_{\max})-s-\gamma(a_{j+}(\eta,r_j)-s)}}\sqrt{\frac{a_{+}(\eta,r_{\max})-s}{a_{+}(\eta,r_{\max})-s-\gamma(a_{j-}(\eta,r_j)-s)}} \nonumber \\
    &\times \frac{8|\beta_j|^2+2(s^2-1)}{(s+1)^3} e^{-\left(\frac{2}{s+1}+\gamma \frac{2}{a_{+}(\eta,r_{\max})-s}\right)|\beta_j|^2} .
\end{align}
We can efficiently estimate the probability if $\max_{\beta_j} |f_{\text{GBS},j}| \leq 1$ for all $j$. An upper bound of the absolute value of the estimate is given by
\begin{equation}
    \min_{s,\gamma}\max_{\beta_j}|f_{\text{GBS},j}(\beta_j,\eta,r_j,1,\gamma,s)|\leq \max_{\beta_j}|f_{\text{GBS},j}(\beta_j,\eta,r_{\max},1,0,s_{\max})|,
\end{equation}
for given $\eta,r_{\max}$, and $\gamma=0$, $s=s_{\max}$ for the inequality.
Then from the condition $\max_{\beta_j}|f_j(\beta_j,\eta,r_{\max},1,0,s_{\max})| \leq 1$, $s_{\max}$ satisfies $s_{\max}\geq \sqrt{5}-2\simeq 0.236$. This corresponds to $r_{\max}\leq \frac{1}{2}\log(2+\sqrt{5})\simeq 0.722$ for an ideal GBS ($\eta=1$). However, if we allow the photon loss, any squeezed input state is possible when $\eta \leq 3-\sqrt{5}\simeq 0.764$, which is much higher transimissivity than those used in current experiments~\cite{zhong2021phase,madsen2022quantum}. Next, we need to check whether this condition is valid for any other outcomes. From the behavior of $f_j(\beta,\eta,r,m,0,s)$, we can find out that 
\begin{equation}
    \max_{\beta} |f_j(\beta,\eta,r,m,0,s)| \leq \max_{\beta} |f_j(\beta,\eta,r,1,0,s)|,
\end{equation}
for $m \geq 2$ and $s\geq 0$. Finally, we consider $n=0$ for zero-photon detection and $f_j=1$ for the marginalized probability owing to the normalization of measurement operators. In both cases, the integrals for $\beta_j$'s can be easily computed because $f_j(\beta_j)$ and $\beta_j$ components in $P(\bm{\alpha})$ are just Gaussian distributions. Therefore, we can always perform the integrals including $\beta_j$'s corresponding to zero-photon or marginalized one, and estimate remaining terms.
Furthermore, we examine the case of threshold detectors instead of number resolving measurements~\cite{quesada2018gaussian}. The corresponding $f_{\text{on},j}$ for a `click' event is written as
\begin{align}\label{eq:GBSth}
    f_{\text{on},j}(\beta_j,\eta,r_j,\gamma,s)&= \sqrt{\frac{a_{+}(\eta,r_{\max})-s}{a_{+}(\eta,r_{\max})-s-\gamma(a_{j+}(\eta,r_j)-s)}}\sqrt{\frac{a_{+}(\eta,r_{\max})-s}{a_{+}(\eta,r_{\max})-s-\gamma(a_{j-}(\eta,r_j)-s)}} \nonumber\\
    &\times \left( 1-\frac{2}{s+1}e^{-\frac{2}{s+1}|\beta_j|^2}\right)e^{-\gamma \frac{2}{a_{+}(\eta,r_{\max})-s}|\beta_j|^2}.
\end{align}
Then for $\gamma=0$ and $s=0$, the range of $f_{\text{on},j}(\beta_j,\eta,r_j,0,0)=1-2e^{-2|\beta_j|^2}$ is on $[-1,1]$, thus the poly-box condition is satisfied for all input squeezing and loss parameter. 

Now we investigate whether an efficient estimation of GBS probability within multiplicative-error is possible. To do that, we consider Gaussian states which can have  $s_{\max}>1$, where the covariance matrix of $i$th mode state is given by 
\begin{equation}
    V_i=\frac{1}{2}\begin{pmatrix}\eta e^{2r_i}+(2n_{\text{th}}+1)(1-\eta) & 0 \\ 0 & \eta e^{-2r_i}+(2n_{\text{th}}+1)(1-\eta)  \end{pmatrix}\coloneqq  \frac{1}{2}\begin{pmatrix}a_{i+}(\eta,n_{\text{th}},r_i) & 0 \\ 0 & a_{i-}(\eta,n_{\text{th}},r_i)\end{pmatrix}.
\end{equation}
These are squeezed thermal states, in which pure squeezed states undergo a thermal noise with average photon number $n_{\text{th}}$ instead of the vacuum loss. In this case $-1<s\leq a_-(\eta,n_{\text{th}},r_{\max})$ for given $\eta$, $n_{\text{th}}$, and $r_{\max}$. Then we need to check the log-concavity of $f_{\text{on},j}$ such that
\begin{equation}\label{eq:multith}
    f_{\text{on},j}(\beta_j,\eta,n_{\text{th}},r_i,\gamma,s) \propto \left( 1-\frac{2}{s+1}e^{-\frac{2}{s+1}|\beta_j|^2}\right) e^{-\gamma\frac{2}{a_{+}(\eta,N,r_{\max})-s}|\beta_j|^2}.
\end{equation}

From Lemma~\ref{lem1}, the condition for log-concavity of $f_{\text{on},j}$ when $\gamma \rightarrow 1$ is
\begin{equation}
     \sqrt{4e^{r_{\max}} \eta \sinh{r_{\max}}+4n_{\text{th}}(1-\eta)+5} \leq s \leq a_-(\eta,n_{\text{th}},r_{\max}),
\end{equation}
where $a_-(\eta,n_{\text{th}},r_{\max})=\eta^{-2r_{\max}}+(2n_{\text{th}}+1)(1-\eta)$. Thus for given $r_{\max}$ and $\eta$, the average photon number of thermal noise $N$ satisfies
\begin{equation}
    n_{\text{th}} \geq \frac{e^{-r_{\max}}\eta\sinh{r}+\sqrt{1+\eta\sinh{2r_{\max}}}}{1-\eta}>1.
\end{equation}
For instance, if $\eta=0.5$ and $r_{\max}=1$, then $n_{\text{th}} \geq n_{\text{th}}^* \simeq 3.79$ for the multiplicative-error estimation of the probability, and the minimum value of $s_{\max}$ is $3$ when $\eta \rightarrow 0$.

\end{widetext}

\bibliography{reference}
\end{document}